\newtheorem{theorem}{Theorem}
\newtheorem{lemma}[theorem]{Lemma}
\newtheorem{proposition}[theorem]{Proposition}
\newtheorem{corollary}[theorem]{Corollary}
\newtheorem{definition}{Definition}
\newcommand*{\citet}[1]{\citeauthor{#1} \shortcite{#1}}
\newcommand{\calS}{{{\mathcal G}}}
\newcommand{\calV}{{{\mathcal V}}}
\newcommand{\calE}{{{\mathcal E}}}
\newcommand{\calA}{{{\mathcal A}}}
\newcommand{\rmD}{{\mathrm D}}
\newcommand{\rmN}{{\mathrm N}}
\newcommand{\rmR}{{\mathrm R}}
\begin{document}

\title{Single-Crossing Implementation}  

\author{Nathann Cohen$^1$\and Edith Elkind$^2$
\and Foram Lakhani$^2$
\affiliations
$^1$CNRS, France\\
$^2$University of Oxford, UK}  

\maketitle

\begin{abstract}  
An election over a finite set of candidates 
is called single-crossing if, as we sweep through the list of voters
from left to right, the relative order of every pair of candidates changes at most once. 
Such elections have many attractive properties:
e.g., their majority relation is transitive and they admit efficient algorithms for problems that
are NP-hard in general. If a given election is not single-crossing, 
it is important to understand what are the obstacles that prevent it from having this property.
In this paper, we propose a mapping between elections and graphs
that provides us with a convenient encoding of such obstacles. 
This mapping enables us to use the toolbox of graph theory 
in order to analyze the complexity of detecting nearly 
single-crossing elections, i.e., elections that can be made
single-crossing by a small number of modifications.
\end{abstract}

\section{Introduction}\label{sec:intro}
As the deadline for Brexit approaches, the British newspapers clarify their positions 
on what they consider to be the best course of action\footnote{\tt https://bit.ly/2txNgpn}. The three 
possibilities currently under consideration are 
(1) accepting Theresa May's deal with the EU (D), 
(2) leaving the EU with no deal (N), 
and (3) postponing Brexit or canceling it altogether (R).
Among the major newspapers, 
the most pro-Leave position is taken by The Daily Telegraph, which is strongly critical of May's deal, 
and ranks it below no deal; it is also strongly opposed to any delays, so its ranking
can be described as $\rmN>\rmD>\rmR$. On the opposite end of the spectrum is The Guardian, which
backs remaining in the EU, but views May's deal as (scarcely) more acceptable than no deal at all, 
so its ranking is $\rmR>\rmD>\rmN$. The Times and The Sun take a more moderate position: both back May's deal, 
but The Times views no deal as the most catastrophic option, while The Sun is firmly opposed to any 
delays to Brexit. Thus, if we order the newspapers according to their political stance,
from left to right, the rankings change from $\rmR>\rmD>\rmN$ to $\rmD>\rmR>\rmN$ to 
$\rmD>\rmN>\rmR$ to $\rmN>\rmD>\rmR$.

These preferences have the property that for any two of the available
alternatives $X$, $Y$ if the first newspaper in our order
ranks $X$ over $Y$ then all newspapers that rank $X$ over $Y$ appear before all newspapers 
that rank $Y$ over $X$, 
i.e., every pair of alternatives `crosses' at most once.
An ordered list of rankings (an {\em election})
that has this property is known as {\em single-crossing}
(see Section~\ref{sec:prelim} for formal definitions). 
Single-crossing elections have many attractive properties
and have recently received a lot of attention in computational
social choice literature; see, e.g., the survey by \citet{ELP-trends}. 
Besides the example in the previous paragraph, 
there are real-life settings where we expect 
to observe essentially single-crossing preferences. For instance, when a country
is about to introduce a flat tax rate and the voters are asked to rank 
several options (say, 25\%, 28\%, 33\%, 45\%), they have to consider the tradeoff
between the amount they will have to pay and the value of the government services 
that can be provided at a given level of taxation; if all voters apply this reasoning, 
the election where the voters are ordered by income should be single-crossing
(this example dates back to \citet{mir:j:single-crossing}.

However, the examples we described, while single-crossing in spirit, may not be
single-crossing in a formal sense, because the single-crossing property is very fragile. 
For instance, in the Brexit scenario, if we expand the list of newspapers to include 
a broader sample of publications, we may find left-leaning newspapers that support no deal.
In the tax scenario, 
not all voters may be capable of evaluating the consequences of each choice.

Now, we can easily check whether an election where voters are ordered
according to a publicly observable parameter is single-crossing,
simply by looking at all pairs of candidates. However, 
as argued above, we expect the answer to be `no'.
A more important---and more challenging!---task is to understand whether this election is close to single-crossing, i.e., 
can be made single-crossing by deleting a few voters or candidates or swapping a few
pairs of candidates, as this will tell us whether the observable parameter
used to order the voters is relevant for understanding the voters' preferences. This knowledge is important,
e.g., for managing electoral campaigns, as it can be used to identify the segments
of voting population that are more likely to support a given candidate (which can 
help the campaign managers to decide whom to target in a get-out-the-vote effort). 

\smallskip

\noindent{\bf Our Contribution\ }\\
We introduce a mapping between elections and graphs that
enables us to use powerful graph-theoretic machinery to analyze 
nearly single-crossing elections. Briefly, given an election $E$
over a set of candidates $C$ (i.e., an ordered list of linear orders over $C$), 
we build an undirected graph $G$ that has $C$
as its set of vertices, and contains an edge between two candidates $a$ and $b$ 
if and only if $a$ and $b$ cross more than once in $E$; 
we say that $E$ {\em implements} $G$. In other words, 
the graph $G$ documents obstacles that prevent $E$ from being single-crossing;
in particular, an independent set in $G$ corresponds to a subset
of candidates such that the restriction of $E$ to this subset
is single-crossing. We then ask which graphs are $n$-implementable, 
i.e., can be implemented by elections with $n$ voters.
We show (Section~\ref{sec:n}) that we can obtain
any undirected graph in this manner; in fact, the number of voters required is bounded by a linear
function in the size of the graph. However, for any constant $n$ there are graphs
that are not $n$-implementable. In Section~\ref{sec:3}
we focus on $3$-implementable graphs and obtain
a complete characterization of this class of graphs by relating it to the class
of {\em permutation graphs}. We also argue that all $3$-implementable
graphs are {\em comparability graphs};
importantly, every comparability graph is a perfect graph. 

Our results have implications for the problem of deciding whether
an election is nearly single-crossing with respect to the given order of voters
(Section~\ref{sec:apps}). In particular, we use our mapping to establish
the hardness of computing two measures of how close a given election is 
to being single-crossing: one of these measures is based on deleting 
as few candidates as possible, and the other is based on splitting the candidates 
into as few groups as possible. On the other hand, 
our results for $3$-implementable graphs enable us
to show that the problems we consider are in P for elections with $3$ voters.

\smallskip

\noindent{\bf Related Work\ }\\
The concept of single-crossing elections has been proposed 
in the social choice literature several decades
ago~\cite{mir:j:single-crossing,rob:j:tax}. Single-crossing elections
are appealing both from a purely social choice-theoretic perspective and from 
a computational perspective: for instance, their weak majority relation
is necessarily transitive \cite{mir:j:single-crossing} 
and they admit efficient algorithms for determining
a winning committee under a well-known committee selection rule whose output
is hard to compute for general preferences \cite{sko-yu-fal:j:mwsc}.
Several groups of authors have considered the problem of identifying
nearly single-crossing elections \cite{bre-che-woe:j:nice,cor-gal-spa:c:spsc-width,elk-lac:c:detect,jae-pet-elk:c:nearly-sc},
but in all these papers the authors assumed that there was no publicly observable
parameter that determined the ordering of the voters, i.e., they
considered the problem of reordering the voters so that the resulting election
can be made single-crossing by applying a small number of modifications;
in contrast, we assume that the order of voters is fixed.

Our analysis is similar in spirit to the research on implementation of directed graphs 
as majority graphs. In this line of work, the input is a directed graph with a vertex 
set $C$, and the goal is to construct an election $E$ over the set of candidates $C$
such that there is a directed edge from $a\in C$ to $b\in C$ in the input graph
if and only if a strict majority of voters in $E$ prefer $a$ to $b$. The classic McGarvey 
theorem~\cite{mcgarvey} establishes that every directed graph can be implemented
in this way using at most two voters per edge, and subsequent work has reduced this number
to $\Theta(\frac{|C|}{\log |C|})$~\cite{stearns,erdos}; Corollary~\ref{cor:2n+1implementation} 
in Section~\ref{sec:n} can be viewed as an analogue of McGarvey's theorem in our setting. 
Recently, \citet{brandt-few} investigated what directed graphs 
can be implemented by elections with two or three voters; this research is similar 
to our analysis in Section~\ref{sec:3}. 
 
\section{Preliminaries}\label{sec:prelim}
For each $s\in{\mathbb N}$, we denote the set $\{1, \ldots, s\}$ by $[s]$.

\smallskip

\noindent{\bf Elections\ }
An \emph{election} is a pair $(C,V)$, where 
$C$ is a finite set of {\em candidates} and $V = (v_1, \ldots, v_n)$ is 
a list of {\em votes}. Each vote $v_i$, $i\in [n]$, is a linear order over $C$.
We refer to elements of $[n]$ as {\em voters}; thus, $v_i$ is the vote of voter $i$.
We will sometimes use the term `profile' to refer to $V$, and we use the terms
`vote', `preference' and `ranking' interchangeably. We 
say that voter $i$ {\em prefers} $a$ to $b$ or {\em ranks} $a$ over $b$ (and write $a\succ_i b$) 
if $a$ precedes $b$ in the linear order $v_i$. 
A {\em restriction} of an election $E=(C, V)$ 
with $V=(v_1, \dots, v_n)$ to a subset of
candidates $X\subseteq C$ is an election $E|_X=(X, V^X)$, 
where $V^X=(v_1^X, \dots, v_n^X)$ and for every pair of candidates $a, b\in X$
and every $i\in [n]$ it holds that $a$ is ranked above $b$ in $v_i^X$
if and only if $a$ is ranked above $b$ in $v_i$.

{\em Single-crossing} (also known as {\em intermediate} or {\em order-restricted}) preferences
capture settings where the voters can be ordered along a single axis
according to their preferences.  
\begin{definition}\label{def:sc}
  An election $E = (C,V)$ with  
  $V = (v_1, \ldots, v_n)$ is 
  {\em single-crossing} 
  if for every pair of candidates $\{a, b\}$
  with $a \succ_1 b$ there is a $t \in [n]$
  such that $\{ i \in [n] \mid a \succ_i b \} = [t]$.
\end{definition}
%
We emphasize that we define single-crossing elections with respect to a fixed order
of the voters, i.e., we are interested in settings where voters are ordered according to a
publicly observable parameter.

\smallskip

\noindent{\bf Graphs\ }
An {\em undirected graph} is a pair $G=(\calV, \calE)$, where $\calV$ is a finite set, 
and $\calE$ is a collection of size-$2$ subsets of $\calV$. The elements of $\calV$ are called {\em vertices},
and the elements of $\calE$ are called {\em edges}. 
For readability, we will sometimes write $ab$ instead of $\{a, b\}$.
%
We assume that the reader is familiar with the definitions of a {\em path}, a {\em cycle}, 
a {\em tree}, and a {\em bipartite graph}.
A {\em hole} is a cycle $a_1, \dots, a_k$ with $k\ge 4$ such that $a_ia_j\in \calE$
if and only if $|i-j|=1$ or $\{i, j\}=\{1, k\}$. An {\em anti-hole} is a 
sequence of distinct vertices $a_1, \dots, a_k$ with $k\ge 4$ 
such that $a_ia_j\not\in\calE$ if and only if $|i-j|=1$ or $\{i, j\}=\{1, k\}$. 
%

A {\em directed graph} is a pair $G=(\calV, \calA)$, where $\calV$ is a finite set,
and $\calA$ is a collection of ordered pairs of elements of $\calV$. An element of $\calA$
is called an {\em arc}; an arc $(a, b)$ {\em points} from vertex $a$ to vertex $b$.
An undirected graph $(\calV, \calE)$ can be turned into a directed graph $(\calV, \calE)$
by choosing an {\em orientation} for each edge $\{a, b\}\in \calE$, i.e., transforming $\{a, b\}$
into $(a, b)$ or $(b, a)$.
A directed graph $G=(\calV, \calA)$ is said to be {\em transitive} if for 
every triple of vertices $u, v, w\in\calV$ such that
$(u,v)\in \calA$ and $(v, w)\in \calA$ it holds that $(u, w)\in \calA$.


\section{Single-Crossing Implementation}
A pair of candidates $\{a, b\}$ 
is a {\em multi-crossing pair} in  
an election $E=(C, V)$ with $V=(v_1, \dots, v_n)$
if $a\succ_i b$, $b\succ_j a$, and $a\succ_k b$
 for some $i, j, k\in [n]$ with $i<j<k$.

\begin{definition}
The {\em multi-crossing graph} of an election $E=(C, V)$
is an undirected graph $\gamma(E) =(\calV, \calE)$ such that $\calV=C$
and $ab\in \calE$ if and only if $\{a, b\}$ is a multi-crossing pair in $E$.
An election $E=(C, V)$ {\em implements} an undirected graph $G=(\calV, \calE)$
if $G=\gamma(E)$.
We say that a graph $G$ is {\em $n$-implementable} if there exists an $n$-voter election that implements it.
Since the set of candidates in an election $(C, V)$ that implements a graph $G = (\calV, \calE)$
is necessarily $\calV$, we often omit $C$ from the notation and speak of a profile $V$ 
that implements $G$. 
\end{definition}

By definition, the only graph that is $2$-implementable is the graph with no edges.
Thus, in the remainder of the paper we study graphs that are $n$-implementable for $n\ge 3$.


\section{$\boldsymbol{3}$-Implementability}\label{sec:3}
To build the reader's intuition, we first consider $3$-implementation.
We show how to implement several families
of graphs, such as paths, trees and even-length cycles. 
While for some of these families their $3$-implementability follows from the more general results
in Section~\ref{sec:3general}, the proofs below provide efficient algorithms 
for finding a $3$-voter profile that implements a given graph.
Also, we relate $3$-implementable graphs
to other well-known classes of graphs, such as permutation graphs and comparability graphs.
Finally, we prove that some graphs are not $3$-implementable.

\subsection{Examples}\label{sec:3example}
First, it is easy to see that we can $3$-implement empty graphs and cliques:
an empty graph is implemented by a profile where all three voters are identical, 
and a clique can be implemented by a profile where the first and the third voter rank 
the candidates in the same order, and the second voter ranks the candidates in the opposite order.
%
%
A somewhat more complex construction establishes that all paths and even-length cycles 
are $3$-implementable.
\begin{proposition}\label{prop:path}
There is a polynomial-time algorithm that given a graph
$P=(\calV, \calE)$ that is a path or an even-length cycle
constructs a $3$-voter profile that implements $P$. 
\end{proposition}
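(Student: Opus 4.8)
The plan is to start by specializing the multi-crossing condition to three voters, where it becomes very rigid. With $n=3$ the only index triple with $i<j<k$ is $(1,2,3)$, so a pair $\{a,b\}$ multi-crosses exactly when $a\succ_1 b$, $b\succ_2 a$, $a\succ_3 b$ (or its mirror image). In other words, $\{a,b\}$ is an edge of $\gamma(V)$ if and only if $v_1$ and $v_3$ order $a,b$ the same way while $v_2$ reverses that order. I would record this as a lemma-style observation and use it throughout: the pairs on which $v_1$ and $v_3$ \emph{agree}, oriented by their common order, form a partial order $P$ on $\calV$, and the edges produced are precisely the agreeing pairs that $v_2$ additionally flips. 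So implementing $P=(\calV,\calE)$ reduces to choosing $v_1,v_3$ whose agreement set has comparability graph containing $\calE$, plus a $v_2$ that reverses exactly the edges among the agreeing pairs.

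For a \emph{path} this gives a clean construction. I would orient the path edges alternately (a zigzag), producing the fence poset whose comparability graph is exactly the path: since consecutive edges point in opposite directions, transitivity adds no new comparabilities, so comparable pairs are precisely the edges. The fence is $2$-dimensional, and I would compute its two realizers $v_1,v_3$ in polynomial time (e.g.\ from an explicit coordinate/planar realization of the fence), so that $v_1$ and $v_3$ agree on every edge and disagree on every non-edge. Then I take $v_2$ to be the reversal of $v_1$: it flips every edge, turning each agreeing pair into a multi-crossing pair, while on the non-edges (already disagreements between $v_1,v_3$) its behaviour is irrelevant. A short check that each edge realizes the $aba$ pattern and each non-edge does not completes this case.

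For \emph{even-length cycles} the same idea \emph{almost} works: because the cycle is bipartite, the alternating orientation closes up consistently. The obstacle is that the resulting cyclic fence is \emph{not} $2$-dimensional (already for $C_6$ it is the standard crown, of dimension $3$), so no two linear extensions can disagree on all non-edges; at least one ``diagonal'' non-edge is forced to be agreed upon by every pair $v_1,v_3$. Hence pure $v_1$--$v_3$ disagreement cannot suppress all non-edges, and the middle voter must do genuine work. The plan is to pick $v_1,v_3$ as two extensions of the cyclic fence that disagree on as many non-edges as possible, and then use $v_2$ to \emph{agree} with $v_1$ on the few unavoidable diagonal non-edges while still reversing all $2k$ cycle edges. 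The main difficulty, and the heart of the argument, is feasibility of $v_2$: the demands ``reverse this edge'' and ``preserve that diagonal,'' together with the wrap-around edge, translate into a digraph of precedence constraints on $v_2$ that can easily close into a directed cycle (the cyclic symmetry makes this delicate). I expect the crux to be choosing the global vertex ordering and the three permutations carefully so that this constraint digraph stays acyclic, yielding a valid $v_2$; once such a choice is exhibited, verification that exactly the cycle edges multi-cross is routine, and all steps are polynomial-time.
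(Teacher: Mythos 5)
Your path case is correct and complete, and it takes a genuinely different route from the paper: you orient the path alternately to get a fence poset, use its two-dimensionality to extract a realizer $v_1,v_3$ whose agreements are exactly the edges, and let $v_2$ be the reversal of $v_1$ so that every agreed pair multi-crosses. This is essentially the paper's Theorem~\ref{thm:permutation} (permutation graphs are $3$-implementable) specialised to paths, whereas the paper's own proof is more elementary and self-contained: it starts from three copies of $1\succ 2\succ\dots\succ s$ and performs interleaved adjacent swaps (candidates $2i,2i+1$ in votes $1$ and $3$; candidates $2i-1,2i$ in vote $2$), then verifies the multi-crossing pairs directly. Your version buys a structural explanation of \emph{why} the two-realizer trick works for paths; the paper's buys an explicit profile with a one-line verification.

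The even-cycle case, however, is a plan rather than a proof, and the step you defer is precisely the content of the proposition for that case. You correctly diagnose the obstruction (the alternating orientation of $C_{2k}$ is its only transitive orientation, the resulting cyclic fence has dimension three --- consistent with the paper's remark that cycles of length at least $5$ are not permutation graphs --- so some diagonal non-edges are agreed upon by every choice of $v_1,v_3$). But you then only list the requirements on $v_2$ --- reverse all $2k$ edges while matching the common order on the forced diagonals --- and say you \emph{expect} a careful choice of the three permutations to keep the precedence digraph acyclic. No concrete $v_1,v_3$ is exhibited, the forced agreement set $D$ is never identified, and no acyclicity argument for the constraint relation (the reversed fence together with $D$) is given; nothing in your text rules out that every near-optimal realizer pair leaves a $D$ that clashes with reversing the wrap-around edge, which is exactly the ``delicate'' cyclic interaction you flag. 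The paper, by contrast, exhibits an explicit shifted-zigzag profile for the even cycle (Table~\ref{tbl:path}, right, for length $6$), which generalises and can be checked pair by pair. To complete your argument you would need to do the analogous concrete work: write down two linear extensions of the cyclic fence, determine $D$ exactly, and prove the constraint digraph for $v_2$ is acyclic --- the ``routine verification'' you postpone is where the entire difficulty of this half of the statement lives.
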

\begin{proof}
Suppose that $P$ is a path. For convenience, we assume that $\calV=[s]$ and 
$\calE= \{\{i, i+1\}\mid i\in [s-1]\}$.
To start, we construct a $3$-voter profile where all voters rank the candidates as $1\succ 2 \succ\dots\succ s$.
Then, we modify the preferences of the first and the third voter by swapping candidates $2i$ and $2i+1$
in her rankings, for $i=1, \dots, \lfloor\frac{s-1}{2}\rfloor$. Also, we modify the preferences
of the second voter by swapping candidates $2i$ and $2i-1$ in her ranking, for 
$i=1, \dots, \lfloor\frac{s}{2}\rfloor$. Table~\ref{tbl:path} (left) illustrates the resulting profiles for $s=6$.
To see why this profile implements a path, consider an even-numbered candidate $2i$, $2i<s$. By construction, 
all voters rank $2i$ above all candidates $j$ with $j<2i-1$ and below all candidates $k$ with $k>2i+1$. 
On the other hand, voters 1 and 3 rank $2i$ below $2i+1$ and above $2i-1$, whereas voter 2
ranks $2i$ above $2i-1$ and below $2i+1$. Thus, $\{2i, j\}$ is a multi-crossing pair if and only if $j\in\{2i-1, 2i+1\}$.
For odd-numbered candidates $2i+1$ with $2i+1<s$, as well as for candidates $1$ and $s$, the argument is similar.  
\begin{table}
\begin{center}
\begin{tabular}{|c|c|c|}
\hline
$v_1$ & $v_2$ & $v_3$\\
\hline
1 &2 &1\\
3 &1 &3\\
2 &4 &2\\
5 &3 &5\\
4 &6 &4\\
6 &5 &6\\
\hline
\end{tabular}\hspace{2cm}
\begin{tabular}{|c|c|c|}
\hline
$v_1$ & $v_2$ & $v_3$\\
\hline
      1 & 2 & 3 \\ 
      3 & 4 & 5 \\
      2 & 3 & 4 \\
      5 & 6 & 1 \\
      4 & 5 & 2 \\
      6 & 1 & 6 \\
\hline
\end{tabular}
\end{center}
\caption{$3$-implementations of path and cycle of length $6$.\label{tbl:path}}
\end{table}

A similar approach can be used if $P$ is an even-length cycle;
we omit the proof, but provide an example in Table~\ref{tbl:path}(right).
\end{proof}

The reader may wonder why we only consider cycles of even length in Proposition~\ref{prop:path}. 
Now, the cycle of length $3$ is $3$-implementable because it is a clique. 
However, for $k\ge 2$ the cycle of length $2k+1$ is not $3$-implementable; this follows from 
Theorem~\ref{thm:3comparability} in Section~\ref{sec:3general}.

On the other hand, we can extend the result of Proposition~\ref{prop:path} to arbitrary trees.

\begin{proposition}\label{prop:tree}
There is a polynomial-time algorithm that given a graph
$T=(\calV, \calE)$ that is a tree
constructs a $3$-voter profile that implements $T$. 
\end{proposition}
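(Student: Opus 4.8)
The starting point is the same characterisation of edges that drove the proof of Proposition~\ref{prop:path}. In a three-voter profile $(v_1,v_2,v_3)$ a pair $\{a,b\}$ is a multi-crossing pair if and only if $v_1$ and $v_3$ order $a,b$ the same way while $v_2$ orders them oppositely. Equivalently, writing $D_{12}$ and $D_{23}$ for the sets of pairs on which $v_1,v_2$ and $v_2,v_3$ disagree, the edge set is exactly $D_{12}\cap D_{23}$: a pair is an edge precisely when it is inverted both between $v_1$ and $v_2$ and between $v_2$ and $v_3$. I would record this as a preliminary observation, since it is the only fact about three-voter profiles that the argument uses. It also explains why paths are easy: a path is a permutation graph, so one may take $v_1=v_3$ and let $v_2$ realise the tree edges as inversions. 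General trees need not be permutation graphs (e.g.\ a subdivided claw is not one), so the construction must genuinely use $v_1\neq v_3$.

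The plan is to build the profile recursively over the rooted tree. Root $T$ at an arbitrary vertex $r$ with children $c_1,\dots,c_d$ and subtrees $T_1,\dots,T_d$, and recursively obtain a three-voter profile for each $T_i$. I would view each partial profile geometrically: reading off, for every vertex $x$, its positions in $v_1$ and $v_3$ as a point in the plane, so that the condition $\{x,y\}\in D_{12}\cap D_{23}$ becomes a coordinatewise-dominance relation, filtered by the order $v_2$. The combination step places the blocks for $T_1,\dots,T_d$ along an anti-diagonal, so that any two vertices lying in different subtrees are incomparable and hence automatically non-adjacent, and then inserts $r$ so that it is comparable to each child $c_i$ but to no deeper descendant. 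Finally $v_2$ is taken to be a single global linear order that confirms the $r$--$c_i$ comparabilities as edges while cancelling every remaining comparability. The recursion touches each vertex once and performs only insertions and sorting, so it runs in polynomial time, as the statement requires.

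Inserting $r$ is the crux. The naive idea of placing $r$ immediately next to a chosen vertex fails: a newly inserted vertex tends to inherit the adjacencies of its neighbour, turning $r$ into a twin of, or a universal vertex over, an entire subtree, whereas $r$ must see only its children. The real difficulty is global consistency of $v_2$: at every level of the recursion the children of the current root must lie on one side of that root in $v_2$ and the non-children on the other, and these requirements, being about a single linear order, can only be met simultaneously if the induced ordering constraints (tree comparabilities oriented one way, spurious comparabilities the other) are jointly satisfiable, i.e.\ acyclic. The subtlety is that a parent and its children impose opposite-looking demands on where a node sits relative to its own block.

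I would resolve this by strengthening the inductive hypothesis with a structural invariant: each subtree root is the unique dominance-extreme vertex of its block, and all of its strict descendants are made incomparable to it, so that $r$ is dominance-comparable only to $c_1,\dots,c_d$. Under this invariant every edge incident to $r$ is controlled purely by where $r$ is placed in $v_2$, spurious comparabilities involving $r$ simply do not exist, and a topological order of the resulting constraint digraph furnishes a consistent global $v_2$. Verifying that this invariant can be preserved through the anti-diagonal combination step, and that the constraint digraph is indeed acyclic, is the step I expect to be the main obstacle; everything else is bookkeeping of positions. A short example (analogous to Table~\ref{tbl:path}) on a non-caterpillar tree such as the subdivided claw would be worth including to make the construction concrete.
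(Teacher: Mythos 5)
Your preliminary observation is correct (for three voters, $\{a,b\}$ is an edge iff $v_1$ and $v_3$ agree on the pair while $v_2$ disagrees, which is the dominance-plus-filter view behind Proposition~\ref{prop:permintersect}), and your stated target for the insertion step --- make the new root $r$ $v_1v_3$-comparable to its children and to nothing else --- is exactly the right one. But the proposal stops where the proof has to start, and the strengthened invariant you propose is, as literally stated, self-contradictory: an edge in a three-voter profile \emph{requires} $v_1$ and $v_3$ to agree on the pair, i.e., dominance comparability, so a subtree root that is ``made incomparable to all of its strict descendants'' is in particular non-adjacent to its own children, destroying the very edges the recursion must preserve. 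If you instead mean ``incomparable to everything strictly below its children,'' then the claim that ``spurious comparabilities involving $r$ simply do not exist'' is precisely what must be verified through the combination step, and you defer both that verification and the acyclicity of the constraint digraph for $v_2$ as ``the main obstacle.'' The obstacle is real: with blocks placed along an anti-diagonal, each child $c_i$ sits in a different rectangle of $(v_1,v_3)$-positions, and a \emph{single} pair of positions for $r$ cannot in general be comparable to one designated vertex of each of $d$ separated rectangles while incomparable to all the other vertices in them.

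The paper's proof dissolves this with two concrete moves your plan lacks. First, before inserting $r$ it pulls the children $r_1,\dots,r_k$ to the top of $v_1$ only (this creates no multi-crossing pairs); now inserting $r$ at position $k+1$ in $v_1$, first in $v_2$, and last in $v_3$ makes $v_1$ and $v_3$ agree on every pair $\{r,r_i\}$ and disagree on every pair $\{r,x\}$ with $x$ a non-child, so $r$'s adjacencies come out exactly right with no invariant about deeper incomparabilities needed. Spurious comparabilities do arise elsewhere (e.g., $c_i$ becomes $v_1v_3$-comparable to vertices of later blocks), but they are cancelled automatically because $v_2$ stacks the blocks in the same order as $v_1$ and $v_3$ --- $v_2$ is produced by the same recursion, and no global topological-sort argument is needed; in fact the paper's inductive invariant is only the one-coordinate condition that each subtree root is ranked first by $v_1$, which is far weaker than your dominance-extremality and is what makes it maintainable. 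Second, since after insertion $r$ is no longer first in $v_1$, the paper restores the invariant by reversing the candidate order within every vote and then reversing the order of the votes, both of which preserve the multi-crossing relation; your proposal has no counterpart of this repair step, so even granting your combination step, the recursion could not continue. Without these ingredients (or equivalents), the proposal is a plan with its central lemma unproved, not a proof.
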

\begin{proof} 
Given a tree $T=(\calV, \calE)$, we pick an arbitrary vertex $r_0\in\calV$ to be its root. 
Our implementation is recursive. We observe that an isolated vertex is trivially implementable.
Then we consider a vertex $r$ of $T$ whose children are $r_1,\dots,r_k$, $k\ge 1$. We show that, 
if for each $i\in [k]$ we have a $3$-implementation of the tree rooted at $r_i$ in which the first voter
ranks $r_i$ first, then we can construct a $3$-implementation of the tree rooted at $r$
in which the first voter ranks $r$ first. Using this idea, we can construct an implementation
of $T$ starting from the leaves and ending at $r_0$.

Fix a vertex $r\in \calV$ that is not a leaf. 
Let $r_1,\dots,r_k$ be the children of $r$,
and for each $i\in [k]$ let $T_i$ be the subtree of $T$ rooted at $r_i$;
let $\calV_i$ be the set of vertices of $T_i$.
Suppose that for each $i\in [k]$ we have 
a $3$-implementation of $T_i$ in which $r_i$ is ranked first in the first vote.

We will first stack these three implementations on top of each other:
we construct a $3$-voter profile where each voter ranks all candidates in $\calV_i$
above all candidates in $\calV_j$ for all $1\le i< j\le k$, and for each $t\in\{1, 2, 3\}$
and each $i\in [k]$ the $t$-th voter ranks $a\in\calV_i$ above $b\in\calV_i$
if and only if the $t$-th voter in the given $3$-implementation of $T_i$ ranks $a$ above $b$.
Then we pull the candidates $r_1, \dots, r_k$ to the top of the first vote: we modify the preferences
of voter $1$ so that she ranks $r_i$ in position $i$ for $i\in [k]$ and the relative order
of the other candidates remains unchanged. Note that this step does not introduce any multi-crossing pairs.

In remains to insert $r$ into the voters' rankings. To this end, in the first vote we insert $r$ after the first 
$k$ candidates, in the second vote we place $r$ on top,  and in the third vote we place $r$ last.
Note that for each $i\in [k]$ the pair $\{r, r_i\}$ is multi-crossing, but for every $i\in [k]$ and 
every $a\in \calV_i\setminus\{r_i\}$ the pair $\{r, a\}$ is not multi-crossing, as both of the first two voters
rank $r$ above $a$. Thus, we have implemented the edges connecting $r$ to its children.
However, in the resulting profile voter $1$ does not rank $r$ first.
To remedy this, we first reverse the order of candidates in each vote and then reverse the order of votes;
neither of these operations changes the set of multi-crossing pairs, and in the resulting
profile $r$ is ranked first. Each of these steps can be implemented in polynomial time.
\end{proof} 


\subsection{General Constructions}\label{sec:3general}
We can relate $3$-imple\-men\-table graphs 
to two well-known classes of graphs: permutation graphs and comparability graphs.

\begin{definition}[Permutation graph]
  An undirected graph $G=(\calV,\calE)$ is a {\em permutation graph} if there exist permutations 
  $\pi_1,\pi_2$ of $\calV$ such that $ab \in \calE$ if and only if 
  $a$ appears before $b$ in exactly one of the permutations $\pi_1$ and $\pi_2$.
\end{definition}

It can be decided in polynomial time whether a given graph is a permutation graph;
moreover, if the answer is `yes', the respective permutations can be constructed 
in polynomial time as well \cite{golumbicbook,cleanup}.

It is immediate that every permutation graph can be implemented by a $3$-voter profile.
\begin{theorem}\label{thm:permutation}
  Every permutation graph is $3$-implementable, and a $3$-voter profile that implements it
  can be computed in polynomial time.
\end{theorem}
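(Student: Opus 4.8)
The plan is to exploit the permutation representation of $G$ directly and to build a three-voter profile whose two ``outer'' votes coincide. First I would record an elementary observation about multi-crossing pairs in a profile with exactly three voters. Since the indices $i<j<k$ in the definition of a multi-crossing pair are forced to be $i=1$, $j=2$, $k=3$, a pair $\{a,b\}$ is multi-crossing in a three-voter profile $V=(v_1,v_2,v_3)$ if and only if voters $1$ and $3$ rank $a$ and $b$ in the same relative order while voter $2$ ranks them in the opposite order (the two witnessing patterns being $a,b,a$ and $b,a,b$).

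Next I would invoke the hypothesis that $G$ is a permutation graph: by the cited results \cite{golumbicbook,cleanup} one can compute in polynomial time permutations $\pi_1,\pi_2$ of $\calV$ such that $ab\in\calE$ if and only if $a$ precedes $b$ in exactly one of $\pi_1$ and $\pi_2$. I would then take the profile $V=(\pi_1,\pi_2,\pi_1)$; that is, voters $1$ and $3$ both cast the vote $\pi_1$, and voter $2$ casts the vote $\pi_2$.

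Finally, I would verify that the multi-crossing graph of $V$ equals $G$. Because voters $1$ and $3$ are identical, they always agree on the relative order of every pair of candidates; hence, by the observation above, $\{a,b\}$ is multi-crossing precisely when voter $2$ disagrees with voter $1$, i.e., when $a$ and $b$ appear in different relative orders in $\pi_1$ and $\pi_2$. This is exactly the condition that $a$ precedes $b$ in exactly one of $\pi_1,\pi_2$, which by the definition of the permutation representation holds if and only if $ab\in\calE$. Thus the multi-crossing graph of $V$ is $G$, and since both computing $\pi_1,\pi_2$ and assembling $V$ take polynomial time, the construction is efficient.

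I do not expect a genuine obstacle here: the statement is ``immediate'' once one realizes that voters $1$ and $3$ should be set equal. The only point requiring a moment's care is the three-voter characterization of multi-crossing pairs, since one must confirm that making the outer votes identical forces every pair's multi-crossing status to be decided solely by whether the middle vote agrees or disagrees with them; everything else is a direct comparison with the defining condition of a permutation graph.
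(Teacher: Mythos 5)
Your proposal is correct and is essentially identical to the paper's proof: the paper likewise takes the witnessing permutations $\pi_1,\pi_2$ (computable in polynomial time by the cited recognition algorithms) and observes that the profile $(\pi_1,\pi_2,\pi_1)$ implements $G$. You merely spell out the easy verification that the paper leaves implicit.
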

\begin{proof}
Let $G$ be a permutation graph, and let
$\pi_1$ and $\pi_2$ be two permutations that witness this. 
Then the profile $(\pi_1, \pi_2, \pi_1)$
implements $G$. 
\end{proof}

In fact, the proof of Theorem~\ref{thm:permutation} suggests a stronger claim:
an undirected graph is a permutation graph if and only if it can be implemented
by a $3$-voter profile where the first and the third voter have the same preferences.
Recall that our implementation of cliques has this property, but our implementation
of even-length cycles does not. There is a reason for this: it is not hard to show that cycles
of length at least $5$ are not permutation graphs. 
%
%
Thus, permutation graphs 
form a proper subclass of $3$-implementable graphs. The following proposition
further clarifies the relationship between $3$-implementable graphs and
permutation graphs.

\begin{proposition}\label{prop:permintersect}
A graph $G= (\calV,\calE)$ is $3$-implementable if and only if there exist two permutation graphs 
$(\calV,\calE^1)$ and $(\calV,\calE^2)$ such that 
$\calE=\calE^1 \cap \calE^2$.
\end{proposition}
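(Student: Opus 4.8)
The plan is to prove both directions by connecting multi-crossing in a $3$-voter profile to the permutation-graph condition established in Theorem~\ref{thm:permutation}. The key observation is that a pair $\{a,b\}$ is multi-crossing in a $3$-voter profile $(v_1,v_2,v_3)$ precisely when the order of $a$ and $b$ flips between $v_1$ and $v_2$ \emph{and} flips again between $v_2$ and $v_3$; equivalently, $\{a,b\}$ is multi-crossing if and only if $a,b$ appear in one relative order in $v_2$ and in the opposite relative order in \emph{both} $v_1$ and $v_3$. I would first record this characterization as the central lemma of the argument, since everything else follows from it.

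For the forward direction, suppose $G=(\calV,\calE)$ is $3$-implementable, say by a profile $(v_1,v_2,v_3)$. I would set $\calE^1$ to be the set of pairs whose relative order differs between $v_1$ and $v_2$, and $\calE^2$ to be the set of pairs whose relative order differs between $v_2$ and $v_3$. By the definition of a permutation graph, $(\calV,\calE^1)$ is the permutation graph witnessed by the pair $(v_1,v_2)$, and $(\calV,\calE^2)$ is the permutation graph witnessed by $(v_2,v_3)$, so both are permutation graphs. By the characterization above, $\{a,b\}$ is multi-crossing exactly when it lies in \emph{both} $\calE^1$ and $\calE^2$, hence $\calE=\calE^1\cap\calE^2$, as required.

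For the converse, suppose $\calE=\calE^1\cap\calE^2$ where $(\calV,\calE^1)$ and $(\calV,\calE^2)$ are permutation graphs. Let $\pi_1,\pi_2$ witness $(\calV,\calE^1)$ and let $\sigma_1,\sigma_2$ witness $(\calV,\calE^2)$. The natural candidate profile is to use a common `middle' permutation for both witnesses so that the crossings align. Concretely, I would build the profile $(\pi_1,\pi_2,\sigma_2)$ after \emph{relabelling} the second permutation pair so that its first permutation coincides with $\pi_2$: since $(\calV,\calE^2)$ is a permutation graph, I can re-express its witness using $\pi_2$ as the reference permutation, letting $\sigma$ be the permutation such that $a$ precedes $b$ in exactly one of $\pi_2,\sigma$ iff $ab\in\calE^2$. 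Then the profile $(\pi_1,\pi_2,\sigma)$ has $\calE^1$ as the pairs flipping between votes~$1$ and~$2$ and $\calE^2$ as the pairs flipping between votes~$2$ and~$3$, so by the characterization its multi-crossing graph is exactly $\calE^1\cap\calE^2=\calE$.

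The main obstacle is the converse direction, specifically justifying that a permutation graph admits a witness in which one of the two permutations is an arbitrary prescribed order. This is the crux: given a permutation graph $(\calV,\calE^2)$ and any fixed permutation $\pi_2$, I must show there exists a permutation $\sigma$ with $ab\in\calE^2$ iff $a,b$ are inverted between $\pi_2$ and $\sigma$. This holds because the permutation-graph property depends only on the \emph{set of inverted pairs} $\calE^2$, and relabelling the vertices by $\pi_2$ turns the question into asking whether a fixed pair-inversion set is realizable as the inversion set of a single permutation relative to the identity—which is exactly the statement that $\calE^2$ is realizable, i.e.\ that $(\calV,\calE^2)$ is a permutation graph. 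I would make this precise by noting that if $(\pi_2',\sigma')$ is any witnessing pair, applying the permutation $\pi_2\circ(\pi_2')^{-1}$ simultaneously to both coordinates preserves the inversion set while sending the first coordinate to $\pi_2$.
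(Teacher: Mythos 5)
Your forward direction is correct, and it is exactly the paper's: for a three-voter profile, $\{a,b\}$ is multi-crossing precisely when $v_2$ disagrees with both $v_1$ and $v_3$ on $\{a,b\}$, so taking $\calE^1$ (resp.\ $\calE^2$) to be the pairs inverted between $v_1,v_2$ (resp.\ $v_2,v_3$) yields two permutation graphs with $\calE=\calE^1\cap\calE^2$. Your converse is also, step for step, the paper's own argument: the paper likewise re-bases the second witness by applying $\sigma=\pi_2\circ\pi_3^{-1}$ to both of its coordinates, asserts that this preserves the witnessed graph, and reads off the profile with middle vote $\pi_2$.

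However, that re-basing step---which you rightly flag as the crux---fails, both in your write-up and in the paper. Applying $\rho=\pi_2\circ(\pi_2')^{-1}$ simultaneously to both coordinates of a witnessing pair does \emph{not} preserve the inversion set: the pair $(\rho\circ\pi_2',\rho\circ\sigma')$ witnesses the relabelled graph $(\calV,\{\{\rho(a),\rho(b)\}\mid \{a,b\}\in\calE^2\})$, which equals $(\calV,\calE^2)$ only when $\rho$ happens to stabilize $\calE^2$. The statement you reduce to---that a permutation graph admits a witness whose first coordinate is an \emph{arbitrary} prescribed order---is false: on $\calV=\{a,b,c\}$ the graph with the single edge $\{a,b\}$ is a permutation graph, witnessed by $((a,b,c),(b,a,c))$, but no witness has first coordinate $(a,c,b)$, since the second coordinate would have to reverse $\{a,b\}$ while keeping $a$ before $c$ and $c$ before $b$, forcing the cycle $b\prec a\prec c\prec b$. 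Moreover, the gap is not patchable, because the `if' direction of the proposition as stated is itself false: take $\calE^1=\{12,13,23,34,45,15\}$ (a labelled `house', a permutation graph witnessed by the orders $(5,3,4,2,1)$ and $(4,1,5,2,3)$) and its rotation $\calE^2=\{12,23,24,34,45,15\}$, also a permutation graph; then $\calE^1\cap\calE^2=\{12,23,34,45,15\}$ is a $5$-cycle, which is not a comparability graph and hence, by Theorem~\ref{thm:3comparability}, not $3$-implementable. What your two directions actually establish is the corrected variant: $G$ is $3$-implementable if and only if the two permutation graphs admit witnessing pairs sharing a common permutation, which then serves as the middle vote.
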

\begin{proof}
Let $(\calV,\calE^1)$ and $(\calV,\calE^2)$ be two permutation graphs on the same set of vertices $\calV$. 
Let $\pi_1,\pi_2$ (respectively, $\pi_3, \pi_4$) be a pair of permutations witnessing that 
$(\calV,\calE^1)$ (respectively, $(\calV, \calE^2)$) is a permutation graph.
Note that if a pair of permutations $\pi, \widehat\pi$ witnesses that a given graph is a permutation
graph, then so does the pair of permutations $\sigma\circ\pi, \sigma\circ\widehat\pi$, 
for any given permutation $\sigma$. Applying this observation to $\pi_3, \pi_4$ with $\sigma=\pi_2\circ\pi_3^{-1}$,
we can assume that $\pi_3=\pi_2$. 
Then the profile $(v_1, v_2, v_4)$ where for each $i\in\{1, 2, 4\}$
voter $i$ ranks the candidates according to $\pi_i$ implements $(\calV, \calE^1\cap\calE^2)$:
a pair of candidates $\{a, b\}$ is multi-crossing in this profile if and only if both $\pi_1$
and $\pi_4$ disagree with $\pi_2=\pi_3$ on the order of $a$ and $b$.

Conversely, let $(v_1,v_2,v_3)$ be a $3$-implementation of a graph $G =(\calV,\calE)$. 
Consider the graphs $(\calV, \calE^1)$ and $(\calV, \calE^2)$ implemented, respectively, 
by $(v_1,v_2,v_1)$ and $(v_2,v_3,v_2)$. As argued in the proof of Theorem~\ref{thm:permutation},
both of these graphs are permutation graphs, and a pair of candidates $\{a, b\}$ is multi-crossing
in $(v_1, v_2, v_3)$ if and only if $ab\in\calE^1$ and $ab\in \calE^2$.
This completes the proof.
\end{proof}

Another relevant class of graphs is {\em comparability graphs}.

\begin{definition}[Comparability graph]
  A graph $G=(\calV, \calE)$ is a {\em comparability graph} if edges in $\calE$ can be oriented
  so that the resulting directed graph $(\calV, \calA)$ is transitive. 
\end{definition}

Comparability graphs can be recognized in polynomial time, and the respective edge orientation 
can be computed efficiently~\cite{golumbicbook,cleanup}.

\begin{theorem}\label{thm:3comparability}
Every $3$-implementable graph is a comparability graph.
\end{theorem}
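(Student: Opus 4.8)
The plan is to exploit the fact that with only three voters the multi-crossing condition becomes extremely rigid, and then to read off a transitive orientation directly from the ranking of the first voter. First I would characterize the edges. Fix a $3$-voter profile $(v_1,v_2,v_3)$ that implements $G$. A multi-crossing pair requires indices $i<j<k$ along which the preference alternates, and with only three voters these must be $i=1$, $j=2$, $k=3$. A short case analysis on the four possible placements of $\{a,b\}$ then shows that $\{a,b\}$ is multi-crossing if and only if voters $1$ and $3$ agree on the relative order of $a$ and $b$ while voter $2$ ranks them the opposite way: if instead voters $1$ and $3$ disagree, the sequence of preferences crosses exactly once, so $\{a,b\}$ is not an edge.

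Next I would define the orientation. For each edge $ab\in\calE$, orient it as the arc $(a,b)$ precisely when $a\succ_1 b$; by the characterization this is equivalent to $a\succ_3 b$ together with $b\succ_2 a$. This is a well-defined orientation of $\calE$, since voters $1$ and $3$ agree on every edge, so exactly one of $a\succ_1 b$ and $b\succ_1 a$ holds. Writing $(\calV,\calA)$ for the resulting directed graph, it remains only to verify that $\calA$ is transitive.

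Finally I would check transitivity. Suppose $(a,b)\in\calA$ and $(b,c)\in\calA$. By the orientation rule, $a\succ_1 b\succ_1 c$ and $a\succ_3 b\succ_3 c$, while $c\succ_2 b\succ_2 a$. Transitivity of the three linear orders $\succ_1,\succ_2,\succ_3$ then gives $a\succ_1 c$, $a\succ_3 c$, and $c\succ_2 a$; that is, voters $1$ and $3$ rank $a$ above $c$ while voter $2$ ranks $c$ above $a$. By the edge characterization this means $ac\in\calE$, and by the orientation rule the corresponding arc is $(a,c)\in\calA$. Hence $(\calV,\calA)$ is transitive, so $G$ is a comparability graph. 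The only step requiring genuine care is the edge characterization, that a pair is an edge exactly when the middle voter disagrees with the two outer voters; once this is in hand, transitivity of the orientation is an immediate consequence of the transitivity of each individual vote, so there is no serious obstacle.
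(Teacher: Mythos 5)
Your proof is correct and matches the paper's argument essentially step for step: the paper also orients each edge $\{a,b\}$ as $(a,b)$ when $a\succ_1 b$, notes that the multi-crossing condition with three voters forces $b\succ_2 a$ and $a\succ_3 b$, and derives transitivity of the orientation from the transitivity of the three individual rankings. Your explicit statement of the edge characterization (an edge iff voters $1$ and $3$ agree while voter $2$ disagrees) is left implicit in the paper but is the same observation, so there is nothing to add.
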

\begin{proof}
Consider a graph $G=(\calV, \calE)$ implemented by a profile $(v_1, v_2, v_3)$.
We orient the edge $\{a,b\}$ from $a$ to $b$ if $a \succ_1 b$ and from $b$ to $a$ otherwise.
Since $\{a, b\}$ is a multi-crossing pair, $a\succ_1 b$ implies $b \succ_2 a, a \succ_3 b$.
Consider a pair of arcs $(a, b)$, $(b, c)$ in the resulting directed graph.
We have $a\succ_1 b$, $b\succ_1 c$ and hence $a\succ_1 c$.
Similarly, $b\succ_2 a$, $c\succ_2 b$ implies $c\succ_2 a$ and 
           $a\succ_3 b$, $b\succ_3 c$ implies $a\succ_3 c$.
Thus, our directed graph also contains the arc $(a, c)$. 
\end{proof}

Comparability graphs are known to be perfect graphs \cite{mirsky}, 
i.e., graphs that contain neither odd-length holes nor odd-length anti-holes\footnote{%
Originally, perfect graphs are defined as graphs with the property that the chromatic number 
of every induced subgraph is equal to the size of the maximum clique in that subgraph~\cite{berge};
however, by the strong Berge conjecture, which was proved by \citet{perfect},
perfect graphs are exactly the graphs with no odd-length holes and no odd-length antiholes.}.
Hence, 
Theorem~\ref{thm:3comparability} explains why Proposition~\ref{prop:path} does not extend
to odd cycles: by definition, odd cycles are not perfect graphs.
Also, it subsumes the existence results of Propositions~\ref{prop:path}
and~\ref{prop:tree}: paths, even-length cycles, and trees can be easily seen
to be comparability graphs (we note, however, that these propositions also provide 
efficient algorithms to compute
the respective $3$-voter profiles, and it is not clear how to extract such algorithms  
from the proof of Theorem~\ref{thm:3comparability}). 
In particular, every bipartite graph is a comparability
graph (we can direct the edges from one part to the other), 
and paths, even-length cycles and trees are bipartite graphs.
However, there exists a bipartite graph that is not $3$-implementable
(and hence $3$-implementable graphs form a proper subclass of comparability graphs).  

\begin{proposition} 
  The bipartite $3$-regular graph with parts of size $4$ each (see Figure~\ref{fig:bipartite}) 
  is not $3$-implementable.
\end{proposition}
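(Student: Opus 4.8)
The plan is to assume for contradiction that the graph $G=(\calV,\calE)$ is implemented by a profile $(v_1,v_2,v_3)$ and to derive a contradiction by a pigeonhole argument on the four non-edges coming from the perfect matching. Write $\calV=\{a_1,\dots,a_4,b_1,\dots,b_4\}$ so that $a_ib_j\in\calE$ exactly when $i\ne j$, while the six pairs inside each part and the four matching pairs $\{a_i,b_i\}$ are non-edges.

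First I would fix the orientation of the edges. Following the proof of Theorem~\ref{thm:3comparability}, orient each edge $\{a,b\}$ from $a$ to $b$ whenever $a\succ_1 b$; this orientation is transitive. Since $G$ is connected, bipartite, and hence triangle-free, a transitive orientation can contain no directed path of length two (such a path $a\to b\to c$ would force the arc $a\to c$, but $ac$ is a non-edge), so every vertex is a source or a sink and all edges must point from one part to the other. Hence, after possibly swapping the names of the two parts, I may assume that $a_i\succ_1 b_j$ for all $i\ne j$; because $\{a_i,b_j\}$ is multi-crossing, this also forces $a_i\succ_3 b_j$ and $b_j\succ_2 a_i$. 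Call these relations $(\ast)$.

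The crux is a bound on how many matching pairs can be \emph{inverted} in a single vote. Say that $\{a_i,b_i\}$ is inverted in $v_1$ if $b_i\succ_1 a_i$ (analogously for $v_3$, and in $v_2$ if $a_i\succ_2 b_i$). Using $(\ast)$ I would show that $v_1$ inverts at most one matching pair: if $b_i\succ_1 a_i$, then $(\ast)$ gives $a_\ell\succ_1 b_i\succ_1 a_i$ for every $\ell\ne i$, so $a_i$ is the lowest-ranked candidate of $A$ in $v_1$, and symmetrically $b_i\succ_1 a_i\succ_1 b_\ell$ shows $b_i$ is the highest-ranked candidate of $B$; two inverted pairs would then share the same bottom element of $A$, which is impossible. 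The same computation (reversed for $v_2$, since there $b_j\succ_2 a_i$) bounds the number of inversions in $v_2$ and in $v_3$ by one each. I expect this to be the main obstacle, as it is the only step that uses the global structure of the profile rather than a single pairwise comparison.

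Finally I would combine these facts by pigeonhole. Any matching pair $\{a_i,b_i\}$ that is inverted in none of the three votes satisfies $a_i\succ_1 b_i$, $a_i\succ_3 b_i$ and $b_i\succ_2 a_i$; that is, $v_1$ and $v_3$ agree while $v_2$ disagrees, so $\{a_i,b_i\}$ is a multi-crossing pair and therefore an edge of $G$, contradicting that it is a non-edge. Hence each of the four matching pairs must be inverted in at least one vote. But each vote inverts at most one matching pair, so at most three of the four pairs can be inverted in total. This contradiction shows that $G$ is not $3$-implementable.
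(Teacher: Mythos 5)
Your proof is correct. Its skeleton matches the paper's own (hidden) proof: both first show that in any putative $3$-implementation all twelve cross edges are \emph{aligned} --- up to swapping the parts, $a_i\succ_1 b_j$, $b_j\succ_2 a_i$, $a_i\succ_3 b_j$ for all $i\ne j$ --- and both then derive a contradiction from the four matching non-edges, the engine in each case being that a matching pair with the pattern $a_i\succ_1 b_i$, $b_i\succ_2 a_i$, $a_i\succ_3 b_i$ would be multi-crossing and hence an edge. The execution differs at both stages, and in ways worth noting. For the alignment, the paper proves a bespoke lemma (for an induced path $a$--$b$--$c$, voter~$1$ must rank $b$ below both of $a,c$ or above both) and propagates it edge by edge from a single WLOG comparison; you instead reuse the transitive orientation from the proof of Theorem~\ref{thm:3comparability} and observe that triangle-freeness forbids directed $2$-paths, so every vertex is a source or a sink and, by connectivity, the source/sink classes coincide with the two parts. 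That is more conceptual and yields a reusable fact about any connected triangle-free $3$-implementable graph. For the endgame, the paper argues positionally: the top three positions of $v_1$ and $v_3$, and the bottom three of $v_2$, are filled by left candidates, so by pigeonhole some left candidate is extremal in all three votes and then multi-crosses with all four right candidates, exceeding its degree. Your inversion count --- each vote inverts at most one matching pair because an inverted pair pins the extremal elements of both parts, yet all four pairs must be inverted somewhere --- carries the same information but makes the $4>3$ tension explicit, and it generalizes verbatim to show that $K_{n,n}$ minus a perfect matching is not $3$-implementable for every $n\ge 4$.
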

  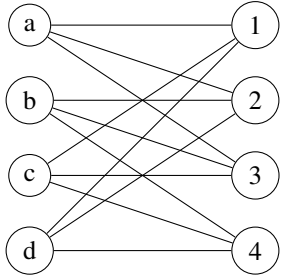
\begin{figure}
  \begin{center}
    \begin{tikzpicture}
      \begin{scope}[scale=.5,auto=left,every node/.style={circle,draw}]
        \node (n1) at (6,7)  {a};
        \node (n2) at (6,5) {b};
        \node (n3) at (6,3)  {c};
        \node (n4) at (6,1) {d};
        \node (n5) at (12,7) {1};
        \node (n6) at (12,5) {2};
        \node (n7) at (12,3) {3};
        \node (n8) at (12,1) {4};
      \end{scope}
      \begin{scope}[every node/.style={fill=white,circle}]
        \path  (n1) edge  (n5);
        \path  (n1) edge  (n6);
        \path  (n1) edge  (n7);
        \path  (n2) edge  (n6);
        \path  (n2) edge  (n7);
        \path  (n2) edge  (n8);
        \path  (n3) edge  (n7);
        \path  (n3) edge  (n8);
        \path  (n3) edge  (n5);
        \path  (n4) edge  (n8);
        \path  (n4) edge  (n5);
        \path  (n4) edge  (n6);
      \end{scope}
    \end{tikzpicture}
  \end{center}
\caption{A bipartite graph that is not $3$-implementable.\label{fig:bipartite}}
\end{figure}

\section{$\boldsymbol{n}$-Implementation for $\boldsymbol{n>3}$}\label{sec:n}
We have seen that not all graphs are $3$-implementable. 
However, we will now show that every graph $G=(\calV, \calE)$ is implementable
by an election whose number of voters is linear in $\min\{|\calV|, |\calE|\}$.
%
We first define a class of single-crossing elections that can be used to implement an arbitrary graph.

\begin{definition}\label{def:fullysc}
A single-crossing election $E = (C, V)$ with $V=(v_1, \dots, v_n)$ is
{\em fully single-crossing} if
for every pair of candidates $a, b\in C$ with $a\succ_1 b$ 
there is an $i\in[n-1]$ such that $a\succ_i b$, $b \succ_{i+1} a$, and voter $i+1$ ranks $b$
just above~$a$. 
\end{definition}

Note that in a fully single-crossing election the ranking of the last voter is the inverse
of the ranking of the first voter, i.e., every pair of candidates `crosses' exactly once.

\begin{theorem}\label{thm:full-to-imp}
If there exists a fully single-crossing election $(C, V)$ with $|C|=m$, $|V|=n$ then every $m$-vertex
graph is $(2n-1)$-implementable.
\end{theorem}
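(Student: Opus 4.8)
The plan is to use the given fully single-crossing profile $V=(v_1,\dots,v_n)$ on the candidate set $C$ as a ``skeleton'', to identify the vertex set $\calV$ of the target graph with $C$ via an arbitrary bijection, and then to insert $n-1$ carefully chosen \emph{correction voters} so that a pair of candidates becomes multi-crossing precisely when it is an edge of $G$. First I would record the structural consequences of the hypothesis. Since $V$ is single-crossing, every pair $\{a,b\}$ crosses exactly once, say between $v_t$ and $v_{t+1}$, and full single-crossingness forces $a$ and $b$ to be adjacent at that crossing. I would then show that for each step $t\in[n-1]$ the set $X_t$ of pairs crossing between $v_t$ and $v_{t+1}$ consists of \emph{disjoint} adjacent transpositions: if two crossing pairs shared a candidate, the three elements involved would be reversed as a consecutive block, and the ``outer'' pair of that block would then cross non-adjacently, contradicting Definition~\ref{def:fullysc}.

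With this in hand the construction is direct. For each $t$, let $u_t$ be the linear order obtained from $v_{t+1}$ by swapping back, to the order they have in $v_t$, exactly the pairs in $X_t\cap\calE$; because $X_t$ is a set of disjoint adjacent transpositions, undoing any subset of them yields a legitimate linear order. I then define the $(2n-1)$-voter profile $W=(v_1,v_2,u_1,v_3,u_2,\dots,v_n,u_{n-1})$, that is, I place the correction voter $u_t$ immediately after $v_{t+1}$.

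To verify $\gamma(W)=G$, I would fix a pair $\{a,b\}$ and track the relative order of $a$ and $b$ along $W$, using that $\{a,b\}$ crosses only at its unique step $t$. For $s\neq t$ we have $\{a,b\}\notin X_s$, so each correction voter $u_s$ leaves the order of $\{a,b\}$ equal to that of $v_{s+1}$; hence the order is constant up to $v_t$ and reversed from $v_{t+1}$ on, and the only voter that can break this monotone pattern is $u_t$, which restores the original order exactly when $\{a,b\}\in\calE$. Thus an edge produces the pattern $a\succ b,\ b\succ a,\ a\succ b$ across the voters $v_t,\,v_{t+1},\,u_t$ (appearing in this order in $W$) and is multi-crossing, whereas a non-edge retains a single clean crossing and is not. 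Therefore $W$ implements $G$, and it has $n+(n-1)=2n-1$ voters.

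The main obstacle, and the point where the \emph{full} (rather than plain) single-crossing hypothesis is essential, is the adjacency/disjointness claim about $X_t$: it is what guarantees that the correction voters are well defined (undoing a chosen subset of simultaneous crossings is a valid permutation) and that corrections belonging to different steps never interfere with the tracked pair. Once this is in place, the sign-tracking above is routine. I would also remark that every step of the construction is polynomial-time computable, so the result is effective and not merely existential.
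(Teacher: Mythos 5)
Your proof is correct and is essentially the paper's construction in disguise: your profile $W=(v_1,v_2,u_1,v_3,u_2,\dots,v_n,u_{n-1})$ is exactly the paper's duplicated profile $\widehat{V}=(v_1,v_2,v_2,\dots,v_n,v_n)$ after, for each edge $ab\in\calE$, swapping the adjacent pair back in the second copy of the vote at its unique crossing step. Your explicit lemma that each $X_t$ is a set of pairwise disjoint adjacent transpositions (via the consecutive-block argument) is a nice touch, since the paper leaves this well-definedness point implicit.
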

\begin{proof}
Consider a fully single-crossing election $E=(C, V)$ with $|C|=m$, $V=(v_1, \dots, v_n)$, 
and let $G=(\calV, \calE)$ be an $m$-vertex graph. Let $\widehat{V}=(v_1, v_2, v_2, \dots, v_n, v_n)$.
By construction, the election 
$\widehat{E}=(C, \widehat{V})$ is single-crossing. Now, for each edge $ab\in\calE$ we identify 
an $i\in [n-1]$ such that in $E$ we have $a\succ_i b$, $b\succ_{i+1} a$, and voter $i+1$ ranks $b$ just above $a$.
We then swap $a$ and $b$ in the preferences of the $(2i+1)$-st voter in $\widehat{V}$ 
(who, like voter $i+1$
in the original election, ranks $b$ just above $a$ prior to the swap). 
This ensures that $\{a, b\}$ is a multi-crossing pair in the 
resulting election.
In the end we obtain an election that implements $G$. 
\end{proof}

A fully single-crossing election with $m$ candidates and ${m\choose 2}+1$ voters 
can be obtained as a maximal chain in a weak Bruhat order; in this election, which we will denote by $E_B$, 
each vote differs
from its predecessor by exactly one swap of adjacent candidates (see \cite{bre-che-woe:j:single-crossing}
and the references therein). One can use $E_B$ as a starting point to implement  
an arbitrary graph $G=(\calV, \calE)$ with $2|\calE|+1$ votes: we take $E_B$, 
remove each vote that is obtained from its predecessor 
by swapping a pair of candidates that does not correspond to an edge in the input graph, 
and then use the construction in the proof of Theorem~\ref{thm:full-to-imp}. However, for dense
graphs this produces an implementation with $\Theta(|\calV|^2)$ voters. 

In contrast, our next theorem,  
in conjunction with Theorem~\ref{thm:full-to-imp}, shows that every graph 
$G=(\calV, \calE)$ is $(2|\calV|+1)$-implementable.
Our construction is inspired by the concept of odd-even sort 
\cite{ldm84} and, to the best of our knowledge, is new.

\begin{theorem}\label{thm:fullysc_m+1}
For every $m\ge 2$ there exists a fully single-crossing election with $m$ candidates and $m+1$ voters.
\end{theorem}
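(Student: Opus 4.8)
The plan is to construct the required election explicitly using \emph{odd-even transposition sort}. I label the candidates $1, \dots, m$ and let the first vote be the identity order $v_1 = (1 \succ 2 \succ \dots \succ m)$. For $t = 1, \dots, m$ I obtain $v_{t+1}$ from $v_t$ as follows: if $t$ is odd, then for every odd position $p$ I swap the two candidates occupying positions $p$ and $p+1$ whenever they currently appear in increasing label order; if $t$ is even, I do the same for every even position $p$. Each transition thus consists of a set of pairwise disjoint swaps of \emph{adjacent} candidates, and I swap a pair only when it is in increasing order, turning it into decreasing order. This yields $m+1$ votes $v_1, \dots, v_{m+1}$, which is the correct count.

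Two properties then need checking. First, the single-crossing and ``clean crossing'' conditions are immediate from the construction: a pair $\{a,b\}$ with label $a < b$ is swapped only when $a$ lies immediately above $b$, and after such a swap it is in decreasing order and is never touched again, since we never swap a decreasing pair. Hence the relative order of every pair flips at most once, which is exactly single-crossing, and at the flip the two candidates are adjacent with $b$ just above $a$, which is precisely the requirement of Definition~\ref{def:fullysc}. Second, and this is the crux, I must show that $v_{m+1}$ equals the full reverse $(m \succ m-1 \succ \dots \succ 1)$; equivalently, that every one of the $\binom{m}{2}$ pairs has crossed after $m$ rounds, so that the ``for every pair'' quantifier of Definition~\ref{def:fullysc} is satisfied.

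To establish the reversal I would route the argument through the 0-1 reduction underlying the analysis of sorting networks. The key observation is that a single comparator (``swap if increasing'') commutes with thresholding: for any $c$, replacing each label $\ell$ by the indicator of $\ell \ge c$ and then applying the comparator gives the same outcome as applying the comparator first and then thresholding. Consequently the evolution of each thresholded 0-1 projection of the profile is governed by the same round structure. The projection of the initial identity vote at threshold $c$ is the single monotone block $0^{c-1}1^{m-c+1}$, so it remains to argue that odd-even transposition sort brings a single block of $1$'s to the top within $m$ rounds. I would prove this by a token-movement analysis, seeded by the clean fact (provable by a one-line parity argument) that the smallest label sinks by exactly one position in every round, reaching the bottom after $m-1$ rounds. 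Once every threshold projection is sorted, the array itself is the reverse order, and every pair has crossed.

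The main obstacle is this last step, the $m$-round bound. The descent of a single extreme candidate is clean and monotone, but because the rounds act in parallel one cannot simply ``park'' the extreme candidates and recurse on the interior; the bound genuinely relies on all candidates moving simultaneously, which is why I pass to the 0-1 projections (or, alternatively, invoke the classical correctness of odd-even transposition sort as a black box). Finally, feeding this fully single-crossing election with $m$ candidates and $m+1$ voters into Theorem~\ref{thm:full-to-imp} yields the promised $(2m+1)$-implementation of every $m$-vertex graph.
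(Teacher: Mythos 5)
Your proposal is correct in substance, and in fact your construction yields the \emph{identical} profile to the paper's: the paper runs the same odd-even alternation but with \emph{unconditional} swaps at the designated positions, and since (as both arguments end up showing) no pair is ever swapped twice on the identity input, a decreasing pair is never touched, so the conditional and unconditional processes coincide. Where you genuinely diverge is in the verification, and the logical structure is inverted. You get ``each pair crosses at most once'' (and the adjacency condition of Definition~\ref{def:fullysc}) for free from the conditional comparators, and the entire burden falls on ``each pair crosses at least once,'' i.e., that $v_{m+1}$ is the full reversal --- which you delegate to the $m$-round correctness of odd-even transposition sort, via a 0-1 reduction whose final token-movement step you only sketch, or by black-box citation. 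The paper instead proves the reversal directly and elementarily by writing down the exact trajectory of every candidate: an even candidate $a$ rises one position per round, occupying position $a-i+1$ in vote $i$ for $i=2,\dots,a$, idles one step at the top, then descends to final position $m+1-a$; an odd candidate descends to the bottom in the first $m+1-a$ votes, idles, then climbs back to position $m+1-a$. It then derives ``exactly once'' by a counting argument: the reversal forces each pair to cross an odd number of times, while the total number of swaps across all transitions is exactly $\binom{m}{2}$. So the paper's proof is self-contained where yours leans on a classical result (a legitimate move --- the $n$-round bound for odd-even transposition sort is standard, and the paper itself credits \cite{ldm84} for the inspiration), while your split buys a cleaner conceptual separation, making the single-crossing and adjacency properties immediate rather than consequences of a global count. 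If you want to close your one acknowledged soft spot without the black box, note that the paper's per-candidate trajectory computation is precisely the missing piece: tracking individual labels, rather than 0-1 blocks, gives the reversal after exactly $m$ rounds by a short induction, letting you dispense with the 0-1 reduction entirely.
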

\begin{proof}
Let $C=\{1, \dots, m\}$.
Consider the following sequence of $m+1$ votes.
The first vote is given by $1\succ_1 2\succ_1\dots\succ_1 m$.
Then, 
for each $\ell=1, \dots, \lceil m/2 \rceil$, the vote
$2\ell$ is obtained from the vote $2\ell-1$ by swapping
the candidates in positions $2i$ and $2i-1$ 
for $i=1, \dots, \lfloor m/2 \rfloor$.
Similarly, 
for each $\ell=1, \dots, \lfloor m/2 \rfloor$, the vote
$2\ell+1$ is obtained from the vote $2\ell$ by swapping
candidates in positions $2i$ and $2i+1$ 
for $i=1, \dots, \lceil m/2 \rceil-1$.
The resulting profile for $m=7$ 
is given in Table~\ref{tbl:78}.
We will now argue that this procedure produces a fully single-crossing profile.

\begin{table}[h!]
  \begin{center}
    \begin{tabular}{|r|r|r|r|r|r|r|r|}
      \hline
      1& 2& 2& 4& 4 & 6 & 6 & 7 \\
      2& 1& 4& 2& 6 & 4 & 7 & 6 \\
      3& 4& 1& 6& 2 & 7 & 4 & 5 \\
      4& 3& 6& 1& 7 & 2 & 5 & 4 \\
      5& 6& 3& 7& 1 & 5 & 2 & 3 \\
      6& 5& 7& 3& 5 & 1 & 3 & 2 \\
      7& 7& 5& 5& 3 & 3 & 1 & 1\\
      \hline
    \end{tabular}
    \caption{A fully single-crossing election for $m=7$.}
    \label{tbl:78}
  \end{center}
\end{table}

     

First, we show that in $v_{m+1}$ the candidates are ranked as $m\succ_{m+1}\ m-1\succ_{m+1}\dots\succ_{m+1} 1$.
Indeed, suppose that $a$ is even. Then, by construction, 
for $i=2, \dots, a$, in the $i$-th vote $a$ is ranked in position $a-i+1$. 
Then, in vote $v_{a+1}$ candidate $a$ remains ranked in position $1$, 
and in the remaining $m-a$ votes $a$ moves down step by step, ending up in position $m+1-a$.
Conversely, if $a$ is odd, it moves down step by step in the first $m+1-a$ votes, stays
in the last position for one more step, and then starts climbing back up, ending in position $m+1-a$.
This implies our claim.

We have shown that each pair of candidates is swapped at least once.
To see that it is swapped exactly once, we compute the total number of swaps.
If $m$ is even, then every even-numbered vote differs from its predecessor by $\frac{m}{2}$ swaps
and every odd-numbered vote apart from the first vote differs from its predecessor by $\frac{m}{2}-1$
swaps. Thus, the total number of swaps is ${m\choose 2}$, and hence
each pair of candidates is swapped exactly once.
For odd values of $m$ the calculation is similar. 

It remains to note that, by construction, if $a\succ_i b$, but $b\succ_{i+1} a$, then $b$ is ranked
just above $a$ in $v_{i+1}$, as we only swap adjacent candidates. This completes the proof.
\end{proof}

Combining Theorem~\ref{thm:fullysc_m+1} and Theorem~\ref{thm:full-to-imp}, 
we obtain the following corollary.
\begin{corollary}\label{cor:2n+1implementation}
  An undirected graph $G=(\calV, \calE)$ is $(2|\calV|+1)$-imp\-le\-men\-table.
\end{corollary}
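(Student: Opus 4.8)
The plan is to obtain the statement by directly chaining the two preceding theorems, with a careful substitution for the voter count. First I would set $m=|\calV|$, so that $G$ is a graph on $m$ vertices. Assuming $m\ge 2$, Theorem~\ref{thm:fullysc_m+1} supplies a fully single-crossing election with $m$ candidates and exactly $n=m+1$ voters. I would then feed this election into Theorem~\ref{thm:full-to-imp}, whose hypothesis is precisely the existence of such an election; its conclusion is that every $m$-vertex graph, and hence $G$ in particular, is $(2n-1)$-implementable. Finally I would substitute $n=m+1$ to rewrite $2n-1=2(m+1)-1=2m+1=2|\calV|+1$, which is exactly the claimed bound.

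The only step requiring separate attention is the boundary case $m\le 1$, which lies outside the range of Theorem~\ref{thm:fullysc_m+1}. This case is immediate: a graph on at most one vertex has no edges and is implemented by a constant profile (all voters identical) of the required length---three identical single-candidate votes when $m=1$, and the empty one-voter profile when $m=0$. Both fit the $2|\calV|+1$ voter count exactly, so I would dispatch these cases in a single sentence before invoking the two theorems.

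I do not anticipate any genuine difficulty, since all the substantive work has already been done in establishing Theorems~\ref{thm:fullysc_m+1} and~\ref{thm:full-to-imp}; the corollary is pure arithmetic bookkeeping on the number of voters. The one point I would be mildly careful about is that the conclusion of Theorem~\ref{thm:full-to-imp} is universally quantified over \emph{all} $m$-vertex graphs, so a single fully single-crossing witness election implements every such graph regardless of its edge set. This is precisely what allows one fixed election obtained from Theorem~\ref{thm:fullysc_m+1} to serve for our arbitrary $G$ without any dependence on $\calE$.
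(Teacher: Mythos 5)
Your proposal is correct and matches the paper exactly: the paper derives the corollary precisely by combining Theorem~\ref{thm:fullysc_m+1} (a fully single-crossing election with $m$ candidates and $m+1$ voters) with Theorem~\ref{thm:full-to-imp}, giving $2(m+1)-1=2|\calV|+1$ voters. Your extra care with the $m\le 1$ boundary cases, which the paper leaves implicit, is a harmless and reasonable addition.
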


The bound in Corollary~\ref{cor:2n+1implementation} is linear in $|\calV|$.
One can ask if we implement each graph using a constant number of votes.
It turns out that the answer is `no'.

To show this, we use 
the Erd\"os--Szekeres theorem \cite{erdos-szekeres} 
to argue that if a graph is implementable by an
election with a few voters then it has to have a large clique or a large independent set.

\begin{lemma}\label{lem:largeclique}
If an $s$-vertex graph is $n$-implementable then it has a clique or size at least
$s^{{1}/{2^{n-1}}}$ or an independent set of size at least $s^{{1}/{2^{n-1}}}$.
\end{lemma}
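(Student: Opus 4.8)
The plan is to derive the lemma from a single structural fact obtained by applying the Erd\"os--Szekeres theorem once for each consecutive pair of voters. First I would record the combinatorial meaning of an edge. Fixing a pair $\{a,b\}$ and writing $x_t=1$ if $a\succ_t b$ and $x_t=0$ otherwise, I claim that $\{a,b\}$ is a multi-crossing pair (an edge of $\gamma(E)$) if and only if the binary string $x_1x_2\cdots x_n$ has at least two sign changes; equivalently, $a$ and $b$ \emph{cross} (swap relative order between two consecutive votes $v_t,v_{t+1}$) at least twice. This follows straight from the definition of a multi-crossing pair: the condition $a\succ_i b$, $b\succ_j a$, $a\succ_k b$ with $i<j<k$ is exactly a $1,0,1$ pattern, and---after swapping the names of $a$ and $b$---a $0,1,0$ pattern, so it is precisely a string with $\ge 2$ crossings.

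Next I would extract a large candidate subset on which every consecutive pair of votes behaves trivially. Let $(v_1,\dots,v_n)$ implement the given $s$-vertex graph, and relabel candidates so that $v_1$ lists them as $1\succ_1\cdots\succ_1 s$. I would apply the Erd\"os--Szekeres theorem iteratively, maintaining a current set on which $v_1,\dots,v_t$ already agree-or-reverse between every consecutive pair: reading the positions of the current candidates in $v_{t+1}$ along the order of $v_t$ and extracting a monotone subsequence yields a subset $S_t$ of the previous set with $|S_t|\ge\sqrt{|S_{t-1}|}$, on which $v_{t+1}|_{S_t}$ either equals $v_t|_{S_t}$ or is its exact reverse. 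Since agreeing or reversing on a set persists on every subset, the earlier conditions survive each further restriction. After the $n-1$ applications needed to process $v_2,\dots,v_n$, the final set $S$ satisfies $|S|\ge s^{1/2^{n-1}}$ and, for every $t\in[n-1]$, $v_{t+1}|_S$ is either equal to $v_t|_S$ or its reverse.

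Finally I would read off the conclusion via a uniformity observation. On $S$, if $v_t|_S$ and $v_{t+1}|_S$ agree then \emph{no} pair of $S$ crosses between votes $t$ and $t+1$, whereas if they are reverses then \emph{every} pair of $S$ crosses there. Hence, letting $D$ be the number of indices $t\in[n-1]$ at which consecutive votes reverse on $S$, every pair $\{a,b\}\subseteq S$ crosses exactly $D$ times, the same integer for all pairs simultaneously. By the first step, if $D\ge 2$ then every pair of $S$ is an edge, so $S$ is a clique; if $D\le 1$ then no pair of $S$ is an edge, so $S$ is an independent set. Either way we obtain a clique or an independent set of size $|S|\ge s^{1/2^{n-1}}$.

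The approach is largely clean, so the delicate points are bookkeeping rather than conceptual. I must ensure that the monotonicity imposed for earlier consecutive pairs continues to hold after each later restriction (it does, by subset-closure of agree/reverse) and that the $n-1$ square-root losses compose to exactly the exponent $1/2^{n-1}$. The genuine crux---and the part worth stating carefully---is the uniformity observation: once the profile is agree-or-reverse on $S$ for every consecutive pair, the crossing count collapses to one common value $D$ for all pairs, which is what forces $\gamma(E)|_S$ to be either complete or empty.
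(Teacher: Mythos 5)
Your proof is correct and takes essentially the same route as the paper's: iterate the Erd\"os--Szekeres theorem $n-1$ times, losing a square root at each step, to extract a set $S$ with $|S|\ge s^{1/2^{n-1}}$ on which every vote agrees with or reverses its neighbour, so that $\gamma(E)$ restricted to $S$ is complete or empty. Your explicit crossing count $D$ (all pairs in $S$ cross exactly $D$ times, so $D\ge 2$ gives a clique and $D\le 1$ an independent set) simply makes rigorous the paper's closing observation that such a set corresponds to a clique or an independent set.
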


On the other hand, we have the following well-known fact, 
which can be easily proved by the probabilistic method
(see, e.g., \cite{bollobas1976}).

\begin{lemma}\label{lem:smallclique}
There exists an integer constant $\alpha>0$ such that for every positive integer $s$ there exists a graph $G=(\calV, \calE)$
with $|\calV|= s$ vertices with the property that each clique and each independent set in $G$  
have at most $\alpha\log s$ vertices. 
\end{lemma}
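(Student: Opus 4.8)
The plan is to use the probabilistic method, exhibiting the desired graph as a typical outcome of a random experiment. Fix $s$ and consider the random graph $G=(\calV,\calE)$ on vertex set $\calV=[s]$ in which each of the $\binom{s}{2}$ potential edges is included independently with probability $\tfrac12$. The intuition is that such a graph looks balanced everywhere, so it should avoid both large cliques (which require all internal pairs to be edges) and large independent sets (which require all internal pairs to be non-edges). First I would estimate the probability that a fixed set of $k$ vertices is \emph{bad}, meaning that it induces either a clique or an independent set. A fixed $k$-set spans $\binom{k}{2}$ pairs; it is a clique exactly when all of them are present, an event of probability $2^{-\binom{k}{2}}$, and it is independent exactly when none of them are present, again of probability $2^{-\binom{k}{2}}$. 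Hence a fixed $k$-set is bad with probability at most $2\cdot 2^{-\binom{k}{2}}$.

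Next I would apply the union bound over all $\binom{s}{k}$ subsets of size $k$, obtaining
$$
\Pr[\text{some }k\text{-set is bad}] \;\le\; \binom{s}{k}\, 2^{\,1-\binom{k}{2}} \;\le\; s^{k}\, 2^{\,1-k(k-1)/2}.
$$
The point is that the quadratic term $k(k-1)/2$ in the exponent eventually dominates the linear term $k\log_2 s$. Concretely, taking $k$ to be roughly $2\log_2 s$ (say $k=\lceil 2\log_2 s\rceil + 2$) makes the right-hand side strictly less than $1$ for all sufficiently large $s$. Whenever this probability is below $1$, at least one graph in the sample space contains no bad $k$-set; that graph has both clique number and independence number at most $k-1\le 3\log_2 s$.

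Finally I would package this into a single constant independent of $s$. The estimate above yields the bound with, say, $\alpha=3$ for all $s$ beyond some threshold $s_0$. For the finitely many values $2\le s < s_0$ not covered by the asymptotic calculation, the claim holds trivially after enlarging $\alpha$: any graph on $s$ vertices has all cliques and independent sets of size at most $s$, and $s\le\alpha\log s$ holds once $\alpha$ is chosen large enough relative to the bounded range $\{2,\dots,s_0-1\}$. Taking $\alpha$ to be the maximum of $3$ and the constant needed for these exceptions gives the statement (interpreted, as usual, for $s\ge 2$, so that $\log s>0$).

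The calculation is entirely routine, and there is no genuine obstacle; the only step requiring a little attention is verifying that a single universal constant $\alpha$ can be chosen, since the asymptotic argument only controls large $s$ and one must separately absorb the small cases where $\log s$ is tiny, which is precisely what inflating $\alpha$ over the finite exceptional range accomplishes.
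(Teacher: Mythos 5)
Your proof is correct and is exactly the classical probabilistic-method argument (random graph $G(s,\tfrac12)$, union bound over $k$-sets with $k\approx 2\log_2 s$) that the paper itself invokes: it states the lemma as a well-known fact ``easily proved by the probabilistic method'' and cites Bollob\'as rather than spelling it out. Your handling of the small-$s$ cases by enlarging $\alpha$, and the remark that $s\ge 2$ is needed so that $\log s>0$, are the right finishing touches to what the paper leaves implicit.
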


Together, Lemmas~\ref{lem:largeclique} and~\ref{lem:smallclique} imply that for every $n\ge 0$
there are graphs that are not $n$-implementable; in fact, our proof shows that 
for each $n$ there is a graph of size at most $2^{2^{2n}}$ with this property.

\begin{theorem}\label{thm:unbounded}
For every positive integer $n$ there exists a graph $G=(\calV, \calE)$ 
with $|\calV|\le 2^{2^{2n}}$ that is not $n$-implementable.
\end{theorem}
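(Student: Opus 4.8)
The plan is to combine the two lemmas that precede the theorem quantitatively. Lemma~\ref{lem:largeclique} tells us that any $n$-implementable $s$-vertex graph must contain a clique or independent set of size at least $s^{1/2^{n-1}}$, while Lemma~\ref{lem:smallclique} produces, for every $s$, a graph on $s$ vertices whose largest clique and largest independent set both have at most $\alpha\log s$ vertices. So to exhibit a graph that is \emph{not} $n$-implementable, it suffices to pick the size $s$ large enough that the lower bound $s^{1/2^{n-1}}$ forced by $n$-implementability strictly exceeds the upper bound $\alpha\log s$ guaranteed by Lemma~\ref{lem:smallclique}; the Lemma~\ref{lem:smallclique} graph at that $s$ is then a witness.

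First I would fix $n$ and instantiate Lemma~\ref{lem:smallclique} to get a graph $G$ on $s$ vertices with no clique or independent set larger than $\alpha\log s$. Then I would argue by contradiction: if $G$ were $n$-implementable, Lemma~\ref{lem:largeclique} would give a clique or independent set of size at least $s^{1/2^{n-1}}$, so we would need $\alpha\log s \ge s^{1/2^{n-1}}$. The whole content of the proof is therefore the elementary inequality: choose $s$ so that $s^{1/2^{n-1}} > \alpha\log s$, i.e. so that this last inequality fails, which forces $G$ to be non-$n$-implementable.

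The only real work is the bookkeeping to show such an $s$ can be chosen with $s\le 2^{2^{2n}}$, matching the bound in the statement. Writing $k = 2^{n-1}$, the condition $s^{1/k} > \alpha\log s$ is equivalent to $\tfrac{1}{k}\log s > \log\alpha + \log\log s$, which holds once $\log s$ is, say, a modest multiple of $k$ (for instance taking $\log s = \Theta(k\log k)$ suffices, since then the linear-in-$\log s$ left side beats the doubly-logarithmic right side). One then checks that the resulting threshold value of $s$ is comfortably below $2^{2^{2n}}$: since $k=2^{n-1}$, a choice like $s = 2^{2^{2n-1}}$ already makes $s^{1/2^{n-1}} = 2^{2^{2n-1}/2^{n-1}} = 2^{2^{n}}$ enormous compared to $\alpha\log s = \alpha\, 2^{2n-1}$, so the inequality is satisfied with room to spare and $s \le 2^{2^{2n}}$.

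The step I expect to be the main obstacle is purely the asymptotic calibration: getting the constant $\alpha$ from Lemma~\ref{lem:smallclique} and the exponent $1/2^{n-1}$ from Lemma~\ref{lem:largeclique} to line up so that the final size estimate lands below the stated bound $2^{2^{2n}}$ uniformly in $n$, including small values of $n$. This is not conceptually deep, but it requires care to absorb $\alpha$ and the $\log$ factors cleanly; I would handle it by choosing the exponent of $s$ generously (as in the computation above) rather than trying to optimize, so that the slack swamps the constant and no case analysis on small $n$ is needed.
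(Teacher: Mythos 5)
Your proposal is correct and follows essentially the same route as the paper's own proof, which also contrasts Lemma~\ref{lem:largeclique} with Lemma~\ref{lem:smallclique} and chooses $s$ so that $s^{1/2^{n-1}}$ exceeds $\alpha\log s$ (the paper takes $s=2^{2^{2n-1}/\alpha}$ and restricts to $n\ge\max\{7,\alpha\}$, just as your choice $s=2^{2^{2n-1}}$ requires $n$ large enough relative to $\alpha$). The only caveat is that for the smallest values of $n$ the inequality $2^{2^n}>\alpha 2^{2n-1}$ can fail when $\alpha\ge 2$, so a word on $n\le 2$ (where only edgeless graphs are implementable, so a single edge suffices) would close the gap that both you and, implicitly, the paper leave open.
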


\section{Applications}\label{sec:apps}
We will now apply the tools developed in Sections~\ref{sec:3} and~\ref{sec:n}
to the problem of detecting elections that are close to being
single-crossing with respect to a given order of voters, for two measures of closeness.

\begin{definition}\label{def:canddel}
An instance of {\sc Candidate Deletion} is given by an election $E=(C, V)$ and an integer $k \ge 1$.
It is a yes-instance if and only if there is a subset $X\subseteq C$ with $|X|\ge |C|-k$ such that
$E|_X$ is single-crossing. 
%
An instance of $k$-{\sc Candidate Partition} is given by an election $E=(C, V)$.
It is a yes-instance if and only if $C$ can be partitioned into $k$ sets $C_1, \dots, C_k$
so that for each $j\in [k]$ the election $E|_{C_j}$ is single-crossing. 
\end{definition}

We will now show that both of these problems are hard, by leveraging our observation
that $E|_X$ is single-crossing if and only if $X$ is an independent set in $\gamma(E)$.

\begin{theorem}\label{thm:canddel-n}
{\sc Candidate Deletion} is {\rm NP}-complete;
$k$-{\sc Candidate Partition}
is {\rm NP}-complete for every $k\ge 3$.
\end{theorem}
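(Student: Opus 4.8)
The plan is to exploit the correspondence between restrictions of an election and induced subgraphs of its multi-crossing graph. The starting observation, immediate from the definitions of restriction and of $\gamma(E)$, is that for any election $E=(C,V)$ and any $X\subseteq C$, the restriction $E|_X$ is single-crossing if and only if $X$ contains no multi-crossing pair, which in turn holds if and only if $X$ is an independent set in $\gamma(E)$. This reframes both problems as classical graph problems on $\gamma(E)$: {\sc Candidate Deletion} asks for an independent set of size at least $|C|-k$ (equivalently, a vertex cover of size at most $k$), while $k$-{\sc Candidate Partition} asks whether the vertex set of $\gamma(E)$ can be partitioned into $k$ independent sets, i.e., whether $\gamma(E)$ is $k$-colorable.

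Membership in NP is routine for both problems: a certificate is the set $X$ (respectively, the partition $C_1,\dots,C_k$), and single-crossingness of a given restriction can be verified in polynomial time by inspecting, for each pair of candidates, how their relative order changes as we sweep through the votes. For hardness of {\sc Candidate Deletion} I would reduce from {\sc Vertex Cover}. Given a graph $G=(\calV,\calE)$ and a bound $k$, I invoke the polynomial-time construction underlying Corollary~\ref{cor:2n+1implementation} (namely the constructive proofs of Theorems~\ref{thm:full-to-imp} and~\ref{thm:fullysc_m+1}) to build, in time polynomial in $|\calV|$, an election $E$ with $\gamma(E)=G$. By the equivalence above, $(E,k)$ is a yes-instance precisely when $G$ has an independent set of size at least $|\calV|-k$, i.e., a vertex cover of size at most $k$; since {\sc Vertex Cover} is NP-complete, so is {\sc Candidate Deletion}.

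For $k$-{\sc Candidate Partition} with $k$ fixed at some value $k\ge 3$, I would reduce from $k$-{\sc Colorability}, which is NP-complete for every fixed $k\ge 3$. Given a graph $G$, I again build an election $E$ with $\gamma(E)=G$ via the polynomial-time implementation behind Corollary~\ref{cor:2n+1implementation}. A partition of $C$ into $k$ sets whose restrictions are each single-crossing is, by the equivalence, exactly a partition of $\calV$ into $k$ independent sets, i.e., a proper $k$-coloring of $G$. Hence $E$ is a yes-instance of $k$-{\sc Candidate Partition} if and only if $G$ is $k$-colorable, which establishes NP-hardness; together with membership in NP this gives NP-completeness for each fixed $k\ge 3$.

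The reductions are structurally trivial once the key equivalence is in place, so the only ingredient that genuinely relies on earlier machinery is that an implementing election with $\gamma(E)=G$ can be produced in polynomial time for an arbitrary input graph $G$. I therefore expect the main (and only minor) obstacle to be bookkeeping: confirming that the implementing election has size polynomial in $G$ and is computable in polynomial time. Both follow from the $2|\calV|+1$ voter bound of Corollary~\ref{cor:2n+1implementation} and the explicit, efficiently computable constructions of Section~\ref{sec:n}.
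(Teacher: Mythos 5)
Your proposal is correct and follows essentially the same route as the paper: establish that $E|_X$ is single-crossing iff $X$ is independent in $\gamma(E)$, then use the polynomial-time implementation construction of Section~\ref{sec:n} (Corollary~\ref{cor:2n+1implementation}) to reduce from an NP-hard graph problem for {\sc Candidate Deletion} and from $k$-{\sc Coloring} for $k$-{\sc Candidate Partition}. The only cosmetic difference is that you reduce from {\sc Vertex Cover} where the paper reduces from {\sc Independent Set}; since these are complements, the arguments are interchangeable (and your parameter translation $|X|\ge |C|-k$ versus cover size $\le k$ is in fact stated more carefully than in the paper's text).
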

\begin{proof}
It is immediate that both of these problems are in NP. 
%
To show that {\sc Candidate Deletion} is NP-hard, we reduce from {\sc Independent Set}.
An instance of {\sc Independent Set} is given by a graph $G=(\calV, \calE)$ and an integer $t$;
it is a yes-instance if $G$ has an independent set of size at least $t$ and a no-instance otherwise.
This problem is well-known to be NP-hard~\cite{GJ}.
Given an instance $\langle G, t\rangle$ of the {\sc Independent Set} problem, we build
an election $E=(C, V)$ that implements it using the construction described in Section~\ref{sec:n};
the size of the resulting election is polynomial in the size of $G$, and $G$ has an independent
set of size at least $t$ if and only if $\langle E, t\rangle$ is a yes-instance 
of {\sc Candidate Deletion}.

We use the same argument for $k$-{\sc Candidate Partition}; the only difference is that
we reduce from the $k$-{\sc Coloring} problem. 
An instance of $k$-{\sc Coloring} is given by a graph $G=(\calV, \calE)$;
it is a yes-instance if there exists a mapping $\chi:\calV\to\{1, \dots, k\}$
such that $\chi(a)\neq\chi(b)$ for every $\{a, b\}\in\calE$ and a no-instance otherwise.
Note that each `color' $\chi^{-1}(i)$, $i\in [k]$, forms an independent set in $G$.
The $k$-{\sc Coloring} problem is well-known to be NP-hard for every $k\ge 3$~\cite{GJ}.
Again, given a graph $G$, we construct an election $E$ that implements it, 
and observe that $G$ is $k$-colorable if and only if $E$ is a yes-instance
of $k$-{\sc Candidate Partition}. 
\end{proof}

On the other hand, we can use the results in Section~\ref{sec:3} to show that
{\sc Candidate Deletion} and $k$-{\sc Candidate Partition} are in P
for elections with at most $3$ voters.

\begin{theorem}
Given an election $E=(C, V)$ with at most three voters and an integer $k$, 
we can decide in polynomial time whether the pair $\langle E, k\rangle$ is a yes-instance
of {\sc Candidate Deletion}. Also, for each $k\ge 1$
we can decide in polynomial time whether $\langle E, k\rangle$ is a yes-instance of 
$k$-{\sc Candidate Partition}.
\end{theorem}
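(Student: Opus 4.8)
The plan is to reduce both problems to graph-theoretic questions about the multi-crossing graph $\gamma(E)$ and then exploit the fact that, for elections with at most three voters, $\gamma(E)$ is a comparability graph. Recall the observation already noted in the excerpt: $E|_X$ is single-crossing if and only if $X$ is an independent set in $\gamma(E)$. Hence an instance of {\sc Candidate Deletion} with parameter $k$ is a yes-instance if and only if $\gamma(E)$ has an independent set of size at least $|C|-k$, i.e., if and only if the maximum independent set of $\gamma(E)$ has size at least $|C|-k$. Likewise, a partition of $C$ into $k$ parts each inducing a single-crossing restriction is exactly a partition of the vertices of $\gamma(E)$ into $k$ independent sets, i.e., a proper $k$-coloring; so $k$-{\sc Candidate Partition} is a yes-instance if and only if the chromatic number of $\gamma(E)$ is at most $k$.

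First I would build $\gamma(E)$ explicitly: since there are at most three voters, we can test each of the ${|C|\choose 2}$ pairs for being multi-crossing in polynomial time. The crucial structural step is to invoke Theorem~\ref{thm:3comparability}. When there are one or two voters $\gamma(E)$ is edgeless and both problems are trivial; otherwise $\gamma(E)$ is $3$-implementable, hence a comparability graph. Moreover, the proof of Theorem~\ref{thm:3comparability} supplies a transitive orientation \emph{for free}: orienting each edge $\{a,b\}$ from $a$ to $b$ whenever $a\succ_1 b$ yields a transitive, acyclic digraph, i.e., a strict partial order $P$ on $C$ whose comparability graph is exactly $\gamma(E)$. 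So we need not run a separate comparability-graph recognition algorithm.

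With the poset $P$ in hand, both optimization problems become classical. An independent set in $\gamma(E)$ is precisely an antichain of $P$, and a clique is precisely a chain. For {\sc Candidate Deletion}, the size of a maximum antichain equals, by Dilworth's theorem, the minimum number of chains needed to cover $P$; this minimum chain cover is computable in polynomial time via a reduction to maximum bipartite matching, so we obtain the maximum independent set of $\gamma(E)$ and compare its size against $|C|-k$. For $k$-{\sc Candidate Partition}, a proper coloring is a partition of $P$ into antichains, and by Mirsky's theorem the minimum number of antichains in such a partition equals the length of a longest chain of $P$ (consistently, since comparability graphs are perfect, $\chi(\gamma(E))=\omega(\gamma(E))$); the longest chain is just a longest directed path in the acyclic digraph $P$, computable in polynomial time by topological sorting. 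We then check whether this value is at most $k$.

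The routine parts are the reductions to bipartite matching and to longest-path computation in a DAG. The hard part — and the real content of the argument — is the structural observation that three voters force $\gamma(E)$ to be a comparability graph: this is exactly what collapses two problems that are NP-complete in general (as established in Theorem~\ref{thm:canddel-n}) into tractable questions about chains and antichains of a partial order, and it is the place where the three-voter hypothesis is essential.
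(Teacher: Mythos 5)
Your proposal is correct, and its first half coincides with the paper's argument: you build $\gamma(E)$ explicitly, use the observation that $E|_X$ is single-crossing if and only if $X$ is independent in $\gamma(E)$, reduce {\sc Candidate Deletion} to maximum independent set and $k$-{\sc Candidate Partition} to $k$-coloring of $\gamma(E)$, and invoke Theorem~\ref{thm:3comparability} to conclude that $\gamma(E)$ is a comparability graph. Where you genuinely diverge is the algorithmic core. The paper stops at ``comparability, hence perfect'' and cites the polynomial-time solvability of {\sc Independent Set} and $k$-{\sc Coloring} on perfect graphs, which ultimately rests on the ellipsoid-method machinery of Gr\"otschel, Lov\'asz and Schrijver. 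You instead observe that the proof of Theorem~\ref{thm:3comparability} hands you a transitive orientation for free---orient $\{a,b\}$ as $(a,b)$ whenever $a\succ_1 b$; acyclicity is automatic since every arc agrees with the linear order $v_1$---and then solve both problems by classical poset combinatorics: a maximum independent set is a maximum antichain, whose size you get from Dilworth's theorem via a minimum chain cover computed by bipartite matching, and the chromatic number equals, by Mirsky's theorem (equivalently, perfection plus the fact that cliques are chains), the length of a longest chain, i.e., a longest directed path in the DAG, computable by topological sorting. Your route buys elementary, purely combinatorial algorithms with concrete polynomial running times, and it sidesteps both perfect-graph algorithmics and any comparability-graph recognition step; the paper's route buys brevity and would apply verbatim to any election class whose multi-crossing graphs are merely known to be perfect, without an explicit orientation in hand. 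Your handling of the degenerate cases is also right: with one or two voters a multi-crossing pair is impossible (it needs indices $i<j<k$), so $\gamma(E)$ is edgeless and both problems are trivial.
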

\begin{proof}
Given an election $E=(C, V)$ with at most three voters, we construct
its multi-crossing graph $\gamma(E)$. By Theorem~\ref{thm:3comparability}
the graph $\gamma(E)$ is a comparability graph and hence a perfect graph.
As argued in the proof of Theorem~\ref{thm:canddel-n}, 
to decide whether
$\langle E, k\rangle$ is a yes-instance of {\sc Candidate Deletion}, 
it suffices to determine whether $\langle \gamma(E), |C|-k\rangle$
is a yes-instance of {\sc Independent Set}, and
to decide whether
$\langle E, k\rangle$ is a yes-instance of $k$-{\sc Candidate Partition}, 
it suffices to determine whether $\langle \gamma(E), k\rangle$
is a yes-instance of $k$-{\sc Coloring}. It remains to note that both
{\sc Independent Set} and $k$-{\sc Coloring} are known to be polynomial-time solvable
on perfect graphs (see, e.g., \citet{diestel}).
\end{proof}

By a similar argument, $2$-{\sc Candidate Partition} is polynomial-time solvable
for any number of voters.
\begin{proposition}
$2$-{\sc Candidate Partition} is polynomial-time solvable.
\end{proposition}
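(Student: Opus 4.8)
The plan is to reduce $2$-\textsc{Candidate Partition} to a $2$-coloring (i.e.\ bipartiteness) test on the multi-crossing graph. By the observation already used repeatedly in this section, a subset $X\subseteq C$ induces a single-crossing subelection $E|_X$ if and only if $X$ is an independent set in $\gamma(E)$. Hence $C$ can be partitioned into two sets $C_1, C_2$ with $E|_{C_1}$ and $E|_{C_2}$ both single-crossing if and only if $\calV=C$ can be partitioned into two independent sets in $\gamma(E)$, which is exactly the statement that $\gamma(E)$ is $2$-colorable, i.e.\ bipartite.

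First I would construct $\gamma(E)$ from $E$ by checking, for each of the $\binom{|C|}{2}$ candidate pairs, whether it is a multi-crossing pair; each such check scans the at most $n$ votes, so the whole graph is built in time polynomial in $|C|$ and $n$. Next I would invoke the standard fact that bipartiteness can be tested in linear time (for instance, by a breadth-first or depth-first search that attempts a proper $2$-coloring and reports failure upon finding an edge between two vertices of the same class). I would then return \emph{yes} precisely when $\gamma(E)$ is bipartite.

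The correctness argument is the short equivalence chain above: $\langle E\rangle$ is a yes-instance of $2$-\textsc{Candidate Partition} $\iff$ $C$ admits a partition into two single-crossing parts $\iff$ $\gamma(E)$ admits a partition into two independent sets $\iff$ $\gamma(E)$ is bipartite. The only point requiring care is the first biconditional, which rests entirely on the equivalence between single-crossing restrictions and independent sets that underlies all the reductions in Theorem~\ref{thm:canddel-n}; once that is in hand, no structural result about $3$-implementable or comparability graphs is needed here.

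I expect no substantive obstacle, since the real content is entirely carried by the independent-set/single-crossing correspondence and by the classical tractability of $2$-\textsc{Coloring} on arbitrary graphs. The contrast with Theorem~\ref{thm:canddel-n}, where $k$-\textsc{Candidate Partition} is shown NP-complete for every $k\ge 3$, is explained by exactly this gap: $k$-\textsc{Coloring} jumps from polynomial to NP-hard precisely at $k=3$, and unlike the $3$-voter case there is no perfectness guarantee on $\gamma(E)$ for general elections to rescue larger $k$.
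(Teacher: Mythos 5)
Your proposal is correct and matches the paper's own proof: both reduce $2$-\textsc{Candidate Partition} to testing $2$-colorability of the multi-crossing graph $\gamma(E)$, using the equivalence between single-crossing restrictions and independent sets. Your write-up merely spells out the polynomial-time construction of $\gamma(E)$ and the bipartiteness test, which the paper leaves implicit.
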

\begin{proof}
An election $E$ is a yes-instance of $2$-{\sc Candidate Partition}
if and only if the graph $\gamma(E)$ is $2$-colorable, 
and $2$-colorability can be checked in polynomial time.
\end{proof}

\section{Conclusions}
We have introduced the notion of single-crossing implementation of a graph 
and showed how to exploit the connection between elections and graphs 
to better understand the complexity of detecting elections that are nearly
single-crossing with respect to a fixed order of voters. Our approach turned
out to be useful for two distance measures: the number of candidates
that need to be deleted to make the input election single-crossing, and the number of parts
that the candidate set needs to be split into so that the projection
of the input election onto each set is single-crossing. There are other distance measures
that can be used in this context: e.g., we can remove or partition voters,
or swap adjacent candidates in voters' preferences. In a companion paper \cite{companion}, 
we explore the complexity of computing how far a given election is from being single-crossing 
according to several other distance measures.

Our work suggests several interesting open questions. First, 
it is not known what is the smallest value of $n$ 
such that every $m$-vertex graph is $n$-implementable: there is a significant
gap between the upper bound of Corollary~\ref{cor:2n+1implementation} and the lower
bound of Theorem~\ref{thm:unbounded}. Second, our characterization of $3$-implementable
graphs does not suggest an efficient algorithm for checking whether a graph
is $3$-implementable.
More broadly, we do not know if one can efficiently compute  
the smallest profile that implements a given graph; we conjecture that this problem is NP-complete.


\bibliographystyle{named}  
\bibliography{main}  

\begin{thebibliography}{}

\bibitem[\protect\citeauthoryear{Bachmeier \bgroup \em et al.\egroup
  }{2017}]{brandt-few}
G.~Bachmeier, F.~Brandt, C.~Geist, P.~Harrenstein, K.~Kardel, D.~Peters, and
  H.~G. Seedig.
\newblock $k$-majority digraphs and the hardness of voting with a constant
  number of voters.
\newblock Technical report, arXiv 1704.06304, 2017.

\bibitem[\protect\citeauthoryear{Berge}{1961}]{berge}
C.~Berge.
\newblock F\"arbung von {G}raphen, deren s\"amtliche bzw.~deren ungerade
  {K}reise starr sind.
\newblock {\em Wissenschaftliche Zeitschrift}, page 114, 1961.

\bibitem[\protect\citeauthoryear{Bollob{\'a}s and
  Erd{\"o}s}{1976}]{bollobas1976}
B.~Bollob{\'a}s and P.~Erd{\"o}s.
\newblock Cliques in random graphs.
\newblock {\em Mathematical Proceedings of the Cambridge Philosophical
  Society}, 80(3):419--427, 1976.

\bibitem[\protect\citeauthoryear{Bredereck \bgroup \em et al.\egroup
  }{2013}]{bre-che-woe:j:single-crossing}
R.~Bredereck, J.~Chen, and G.~Woeginger.
\newblock A characterization of the single-crossing domain.
\newblock {\em Social Choice and Welfare}, 41(4):989--998, 2013.

\bibitem[\protect\citeauthoryear{Bredereck \bgroup \em et al.\egroup
  }{2016}]{bre-che-woe:j:nice}
R.~Bredereck, J.~Chen, and G.~J. Woeginger.
\newblock Are there any nicely structured preference profiles nearby?
\newblock {\em Mathematical Social Sciences}, 79:61--73, 2016.

\bibitem[\protect\citeauthoryear{Chudnovsky \bgroup \em et al.\egroup
  }{2006}]{perfect}
M.~Chudnovsky, N.~Robertson, P.~Seymour, and R.~Thomas.
\newblock The strong perfect graph theorem.
\newblock {\em Annals of Mathematics}, 164:51--229, 2006.

\bibitem[\protect\citeauthoryear{Cornaz \bgroup \em et al.\egroup
  }{2013}]{cor-gal-spa:c:spsc-width}
D.~Cornaz, L.~Galand, and O.~Spanjaard.
\newblock Kemeny elections with bounded single-peaked or single-crossing width.
\newblock In {\em Proceedings of the 23rd International Joint Conference on
  Artificial Intelligence}, pages 76--82, 2013.

\bibitem[\protect\citeauthoryear{Diestel}{2012}]{diestel}
R.~Diestel.
\newblock {\em Graph Theory}.
\newblock Springer, 2012.

\bibitem[\protect\citeauthoryear{Elkind and Lackner}{2014}]{elk-lac:c:detect}
E.~Elkind and M.~Lackner.
\newblock On detecting nearly structured preference profiles.
\newblock In {\em Proceedings of the 28th AAAI Conference on Artificial
  Intelligence}, pages 661--667, 2014.

\bibitem[\protect\citeauthoryear{Elkind \bgroup \em et al.\egroup
  }{2017}]{ELP-trends}
E.~Elkind, M.~Lackner, and D.~Peters.
\newblock Structured preferences.
\newblock In U.~Endriss, editor, {\em Trends in Computational Social Choice},
  chapter~10, pages 187--207. AI Access, 2017.

\bibitem[\protect\citeauthoryear{Erdos and Moser}{1964}]{erdos}
P.~Erdos and L.~Moser.
\newblock On the representation of directed graphs as unions of orderings.
\newblock {\em Publications of the Mathematical Institute of the Hungarian
  Academy of Science}, 9:125--132, 1964.

\bibitem[\protect\citeauthoryear{Erd{\"o}s and Szekeres}{1935}]{erdos-szekeres}
P.~Erd{\"o}s and G.~Szekeres.
\newblock A combinatorial problem in geometry.
\newblock {\em Compositio mathematica}, 2:463--470, 1935.

\bibitem[\protect\citeauthoryear{Garey and Johnson}{1979}]{GJ}
M.~R. Garey and D.~S. Johnson.
\newblock {\em Computers and Intractability: A Guide to the Theory of
  {NP}-Completeness}.
\newblock W. H. Freeman, 1979.

\bibitem[\protect\citeauthoryear{Golumbic}{1980}]{golumbicbook}
M.~C. Golumbic.
\newblock {\em Algorithmic Graph Theory and Perfect Graphs}.
\newblock Academic Press [Harcourt Brace Jovanovich, Publishers], New
  York-London-Toronto, Ont., 1980.
\newblock With a foreword by Claude Berge, Computer Science and Applied
  Mathematics.

\bibitem[\protect\citeauthoryear{Jaeckle \bgroup \em et al.\egroup
  }{2018}]{jae-pet-elk:c:nearly-sc}
F.~Jaeckle, D.~Peters, and E.~Elkind.
\newblock On recognising nearly single-crossing preferences.
\newblock In {\em Proceedings of the 32nd AAAI Conference on Artificial
  Intelligence}, pages 1079--1086, February 2018.

\bibitem[\protect\citeauthoryear{Lakhani \bgroup \em et al.\egroup
  }{2019}]{companion}
F.~Lakhani, D.~Peters, and E.Elkind.
\newblock Correlating preferences and attributes: nearly single-crossing
  profiles.
\newblock In {\em Proceedings of the 28th International Joint Conference on
  Artificial Intelligence}, 2019.

\bibitem[\protect\citeauthoryear{Lakshmivarahan \bgroup \em et al.\egroup
  }{1984}]{ldm84}
S.~Lakshmivarahan, S.~K. Dhall, and L.~L. Miller.
\newblock Parallel sorting algorithms.
\newblock In {\em Advances in Computers}, volume~23, pages 295--354. Elsevier,
  1984.

\bibitem[\protect\citeauthoryear{McGarvey}{1953}]{mcgarvey}
D.~C. McGarvey.
\newblock A theorem on the construction of voting paradoxes.
\newblock {\em Econometrica}, 21(4):608--610, 1953.

\bibitem[\protect\citeauthoryear{Mirrlees}{1971}]{mir:j:single-crossing}
J.~Mirrlees.
\newblock An exploration in the theory of optimal income taxation.
\newblock {\em Review of Economic Studies}, 38:175--208, 1971.

\bibitem[\protect\citeauthoryear{Mirsky}{1971}]{mirsky}
L.~Mirsky.
\newblock A dual of {D}ilworth's decomposition theorem.
\newblock {\em The American Mathematical Monthly}, 78(8):876--877, 1971.

\bibitem[\protect\citeauthoryear{Roberts}{1977}]{rob:j:tax}
K.~W.~S. Roberts.
\newblock Voting over income tax schedules.
\newblock {\em Journal of Public Economics}, 8(3):329--340, 1977.

\bibitem[\protect\citeauthoryear{Simon and Trunz}{1994}]{cleanup}
K.~Simon and P.~Trunz.
\newblock A cleanup on transitive orientation.
\newblock In {\em Orders, algorithms, and applications ({L}yon, 1994)}, volume
  831 of {\em Lecture Notes in Comput. Sci.}, pages 59--85. Springer, Berlin,
  1994.

\bibitem[\protect\citeauthoryear{Skowron \bgroup \em et al.\egroup
  }{2015}]{sko-yu-fal:j:mwsc}
P.~Skowron, L.~Yu, P.~Faliszewski, and E.~Elkind.
\newblock The complexity of fully proportional representation for
  single-crossing electorates.
\newblock {\em Theoretical Computer Science}, 569:43--57, 2015.

\bibitem[\protect\citeauthoryear{Stearns}{1959}]{stearns}
R.~Stearns.
\newblock The voting problem.
\newblock {\em The American Mathematical Monthly}, 66(9):761--763, 1959.

\end{thebibliography}

\end{document}